\newtheorem{theorem}{Theorem}
\newtheorem*{theorem*}{Theorem}
\newtheorem{lemma}[theorem]{Lemma}
\newtheorem{claim}[theorem]{Claim}
\newtheorem{definition}[theorem]{Definition}
\newcommand{\Z}{\mathbb{Z}}
\DeclareMathOperator{\depth}{depth}
\DeclareMathOperator{\ord}{ord}
\DeclarePairedDelimiterXPP\bigO[1]{O}{(}{)}{}{#1}
\DeclarePairedDelimiterXPP\bigomega[1]{$\Omega$}{(}{)}{}{#1}
\DeclarePairedDelimiterXPP\bigtheta[1]{$\Theta$}{(}{)}{}{#1}
\begin{document}

\title{Toward a 2D Local Implementation of Quantum LDPC Codes}

\author{Noah Berthusen}
\email{nfbert@umd.edu}
\author{Dhruv Devulapalli}
\affiliation{Joint Center for Quantum Information and Computer Science, NIST/University of Maryland, College Park, Maryland 20742, USA}
\author{Eddie Schoute}
\affiliation{Computer, Computational, and Statistical Sciences Division, Los Alamos National Laboratory, Los Alamos, NM 87545, USA}
\author{Andrew M. Childs}
\affiliation{Joint Center for Quantum Information and Computer Science, NIST/University of Maryland, College Park, Maryland 20742, USA}
\affiliation{Department of Computer Science and Institute for Advanced Computer Studies,
University of Maryland, College Park, MD 20742, USA}
\author{Michael J. Gullans}
\affiliation{Joint Center for Quantum Information and Computer Science, NIST/University of Maryland, College Park, Maryland 20742, USA}
\author{Alexey V. Gorshkov}
\affiliation{Joint Center for Quantum Information and Computer Science, NIST/University of Maryland, College Park, Maryland 20742, USA}
\affiliation{Joint Quantum Institute, NIST/University of Maryland, College Park, Maryland 20742, USA}
\author{Daniel Gottesman}
\affiliation{Joint Center for Quantum Information and Computer Science, NIST/University of Maryland, College Park, Maryland 20742, USA}
\affiliation{Department of Computer Science and Institute for Advanced Computer Studies,
University of Maryland, College Park, MD 20742, USA}

\begin{abstract}
Geometric locality is an important theoretical and practical factor for quantum low-density parity-check (qLDPC) codes which affects code performance and ease of physical realization.
For device architectures restricted to 2D local gates, naively implementing the high-rate codes suitable for low-overhead fault-tolerant quantum computing incurs prohibitive overhead.
In this work, we present an error correction protocol built on a bilayer architecture that aims to reduce operational overheads when restricted to 2D local gates by measuring some generators less frequently than others. 
We investigate the family of bivariate bicycle qLDPC codes and show that they are well suited for a parallel syndrome measurement scheme using fast routing with local operations and classical communication (LOCC).
Through circuit-level simulations, we find that in some parameter regimes bivariate bicycle codes implemented with this protocol have logical error rates comparable to the surface code while using fewer physical qubits.
\end{abstract}


\maketitle


\section{Introduction}

The surface code, despite showing promising theoretical and experimental performance~\cite{bravyi1998quantum,Kitaev_2003,Fowler2009,Fowler2012,Zhao2022,google2023suppressing}, is poorly suited to large-scale fault-tolerant quantum computation due to its large qubit overhead~\cite{Fowler2012,Litinski_2019,beverland2022assessing}. As a result, there has been much effort on the development of high-rate quantum low-density parity-check (qLDPC) codes~\cite{Breuckmann_2021}. 
As these codes can encode multiple logical qubits, the required space resources are reduced, in some instances, to a constant~\cite{gottesman2014faulttolerant}.

One of the main drawbacks of these high-rate qLDPC codes is that many long-range connections are needed to implement their syndrome extraction circuits~\cite{Bravyi_Terhal_2009, Bravyi_Poulin_Terhal_2010, Baspin_2022, Baspin_2022_2}.
This is a pressing issue for architectures such as superconducting qubits. There, many of the current designs only allow two-qubit gates to be performed between qubits that are 2D nearest neighbors, in which case implementing these long-range entangling gates incurs significant overhead~\cite{Delfosse_Beverland_Tremblay_2021, baspin2023lower}. 
Several recent proposals have attempted to alleviate this overhead by taking advantage of more complex electrical wiring of the superconducting circuits~\cite{Tremblay_2022_thin, bravyi2023highthreshold}, employing code concatenation~\cite{pattison2023hierarchical, gidney2023yoked}, or using bosonic cat qubits~\cite{ruiz2024ldpccat}.
Implementing these long-range connections is less problematic in architectures like neutral atoms, ion traps, or semiconductor spin qubits that can implement long-range gates through qubit movement~\cite{ryananderson2022implementing, xu2023constantoverhead, viszlai2023matching, Bluvstein_2023, moses2023,Burkard23, desmet2024highfidelitysinglespinshuttlingsilicon}.
However, since movement adds additional complications associated with qubit decoherence, heating, and loss, it is worthwhile to consider schemes that limit the amount of movement. In the extreme case, one can consider qubits that are fixed in space and solely use local interactions to perform entangling gates. Such studies provide additional insight into the tradeoffs associated with engineering long-range connectivity through qubit motion or more complex electrical wiring.

In this paper, we present an approach to qLDPC codes that works without qubit motion or long-range couplers, inspired by the so-called stacked model~\cite{Baspin_2022, berthusen2023}. In this model, we assume that the high-rate qLDPC codes of interest have the property that after embedding the code into $\mathbb{Z}^2$, the majority of the stabilizer generators are local; that is, their qubits are contained within a ball of constant radius.
We claim that most of the work required to perform the syndrome extraction circuit with 2D local gates comes from measuring the few nonlocal generators,
so measuring these generators less frequently has the potential to significantly reduce the time overhead, ideally at only a minor cost to the error correction performance of the code.
It was shown in Ref.~\cite{berthusen2023} that, for quantum expander codes~\cite{Leverrier_2015}, neglecting to measure a large percentage of generators could be reasonably tolerated; however, the model considered there was narrow in scope, considering only a phenomenological noise model and neglecting the problem of embedding the codes.
It is therefore unclear whether such codes lend themselves well to physical implementations. 
Nonetheless, their results on partial error correction were optimistic and motivated the investigation of the more realistic architecture developed in this work.

We propose and benchmark a realistic bilayer architecture suited for near- to mid-term superconducting devices and other platforms with restricted qubit movement. 
We find that the recently introduced bivariate bicycle (BB) qLDPC codes~\cite{bravyi2023highthreshold} coming from the larger family of generalized bicycle qLDPC codes~\cite{Kovalev2013} are well suited for both the stacked model and the bilayer architecture.
These codes have natural embeddings into $\mathbb{Z}^2$ where the generators have a repeated structure, and in some instances, a majority of the generators are geometrically small.
The first property makes them amenable to a parallel syndrome measurement scheme using routing with fast local operations and classical communication (LOCC), and the second property makes them good candidates for reducing overhead using the stacked model.
More generally, we develop bounds on how quickly syndrome extraction can be performed in this manner and provide an algorithm to do so.
Overall, we find that over multiple rounds of decoding, BB codes implemented in this architecture have error correction performance comparable to the standard (rotated) surface code, albeit only when the parameters in the error model lie in certain regimes. 

Our work stands as an alternative architecture that may be more practical for near-term quantum computers without the ability to move qubits. As such, it is incomparable to schemes such as Ref.~\cite{xu2023constantoverhead, viszlai2023matching} which allow for qubit movement.
Several recent works have also proposed layered architectures~\cite{Tremblay_2022_thin, bravyi2023highthreshold}; however, their motivation is in minimizing the number of crossings in the two-qubit gate connectivity. They achieve this through the use of long-range connections, the elimination of which is the main imposed constraint of our work. 
Refs.~\cite{pattison2023hierarchical, gidney2023yoked} present asymptotically well performing concatenated schemes which use only local connectivity; however, the required overheads likely make them infeasible for near- and mid-term quantum computers. In particular, Ref.~\cite{gidney2023yoked} estimates that $\sim$600 physical qubits would be needed per each logical qubit, which is an order of magnitude more than what our architecture needs to implement the $[[144,12,12]]$ Gross code~\cite{bravyi2023highthreshold}, with $\sim$48 physical qubits per logical qubit.
Most closely comparable to our work is Ref.~\cite{Delfosse_Beverland_Tremblay_2021}, which aimed to implement quantum expander codes with local connectivity by using a similar teleportation-based scheme. Whereas they arrived at a negative result, the innovations in code choice, partial error correction, and syndrome extraction using entanglement purification presented here allow us to obtain more favorable performance.
In general, our approach may be easier to implement as it only requires a bilayer architecture, local connectivity, and relatively few qubits. It, of course, also comes with challenges, which we later discuss.

The paper is structured as follows. In Section~\ref{sec:background}, we give the necessary background on quantum error correction and introduce the architecture and routing assumptions we consider throughout the work. 
We also review the stacked model and motivate the use of masking.
In Section~\ref{sec:bounds}, we develop lower bounds on the routing time for our specific routing model and provide a greedy algorithm to use in implementations.
Section~\ref{sec:circuit_level_sims} develops an error correction protocol built on a bilayer architecture and culminates with circuit-level simulations comparing the performance with the rotated surface code.
We conclude in Section~\ref{sec:discussion} with a discussion. 

\section{Background}
\label{sec:background}

\subsection{Quantum error correction}

\begin{figure}[t]
    \centering
    \includegraphics[width=\linewidth]{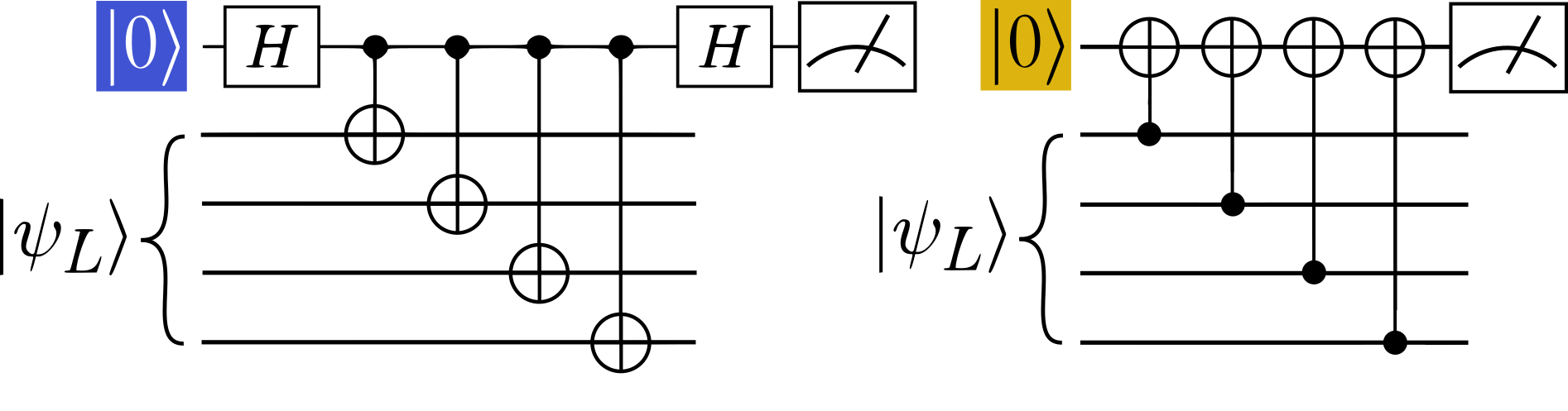}
    \caption{Circuits for measuring the eigenvalue of an $X$-type generator (blue) and a $Z$-type generator (yellow). The $Z$-type measurement presented here is a variation from the standard circuit which uses CZ gates.}
    \label{fig:synd_meas}
\end{figure}

Quantum error correcting codes~\cite{shor1995} 
are believed to be necessary in order to run high-fidelity computations on noisy quantum computers. Without them, errors would accumulate throughout the course of a circuit and render the output unreliable. At a high level, quantum error correcting codes allow us to redundantly encode quantum information in a subspace of the full $2^n$-dimensional Hilbert space and occasionally check to see if errors have caused the information to leave this logical subspace.

\textit{Stabilizer codes} \cite{Gottesman_1997, Calderbank_Rains_Shor_Sloane_1997} are a class of quantum error correcting codes defined by their \textit{stabilizer}, an abelian subgroup of the Pauli group on $n$ qubits that leaves the codespace invariant. Equivalently, the codespace of a stabilizer code is the joint $+1$-eigenspace of the generators of the stabilizer $S = \langle S_1, S_2, \ldots, S_r \rangle$. For a quantum $[[n,k,d]]$ code with $n$ physical qubits, $k$ logical qubits, and distance $d$, the number of linearly independent generators is $r = n-k$. 
A stabilizer code is considered to be a quantum low-density parity-check (qLDPC) code if each qubit is in the support of at most $c$ stabilizer generators and each generator has weight at most $c$, where $c$ is a constant independent of $n$. A stabilizer code is said to be a CSS code~\cite{Calderbank_1996, steane1996} if each generator is a tensor product of $X$ and $I$ or a tensor product of $Z$ and $I$. 
Although the surface code is LDPC, the encoding rate $k/n$ vanishes in the limit as $n \rightarrow \infty$, contributing to its high overhead. Alternative qLDPC codes have asymptotically constant encoding rates while maintaining or improving the $\Theta(\sqrt{n})$ distance scaling of the surface code~\cite{Tillich_2014, hastings2020fiber, Breuckmann_2021_2, Panteleev_2022, panteleev2022asymptotically, leverrier2022quantum, lin2022good}.

From a stabilizer description of a quantum error correcting code, one can define its \textit{Tanner graph} $T(S) = (V_q \ \sqcup \ V_S, E)$. There is a vertex $q \in V_q$ for each data qubit and a vertex $s \in V_S$ for each stabilizer generator.
Two vertices $q \in V_q, s \in V_S$ share an edge $(q, s) \in E$ if the generator $s$ acts non-trivially on qubit $q$.
The Tanner graph of a qLDPC code has degree at most a constant $c$.  

To determine whether the encoded quantum information has left the logical subspace, the eigenvalues of the stabilizer generators are measured.
There are several ways to do this. The circuits depicted in Fig.~\ref{fig:synd_meas} provide one of the most straightforward approaches, which we use throughout the paper. 
As the $n$ data qubits are assumed to be in a codestate, we expect a $+1$ result when the ancillary check qubit is measured. 
A $-1$ result indicates an error that anticommutes with that specific generator.
These measurement results constitute a classical syndrome which is then used as input to a decoding algorithm to correct the errors.

\subsection{Architecture}
\label{sec:arch}
In this work, we consider an architecture where qubits are located on the vertices of an $M \times M$ grid, where $M = \Theta(\sqrt{n})$. As is natural for current superconducting quantum computing platforms, we assume that two-qubit gates can only be performed between neighboring qubits on the grid. 
Any two-qubit gate which interacts qubits that are not neighboring is considered a \textit{long-range} gate.
Circuits that do not have access to long-range gates are called \textit{2D local circuits}, and architectures that are restricted to these circuits are called \textit{2D local architectures}.
This definition generalizes to architectures based on graphs other than the grid: given a connectivity graph $G = (V,E)$ with data qubits located on the vertices, the only allowed two-qubit gates are those between qubits $u,v \in V$ that share an edge $(u,v) \in E$.
Similar restrictions arise if we disallow the slow movement of atoms in neutral-atom devices, in which case the only available two-qubit gates are those performed through Rydberg-Rydberg interactions. This leads to an architecture that can perform entangling gates on qubits that are some distance $R$ away, where $R$ depends on the capabilities of the device. We do not investigate this ability in this work, but we discuss it in Section~\ref{sec:discussion}.

\begin{figure*}
    \centering
    \includegraphics[width=\textwidth]{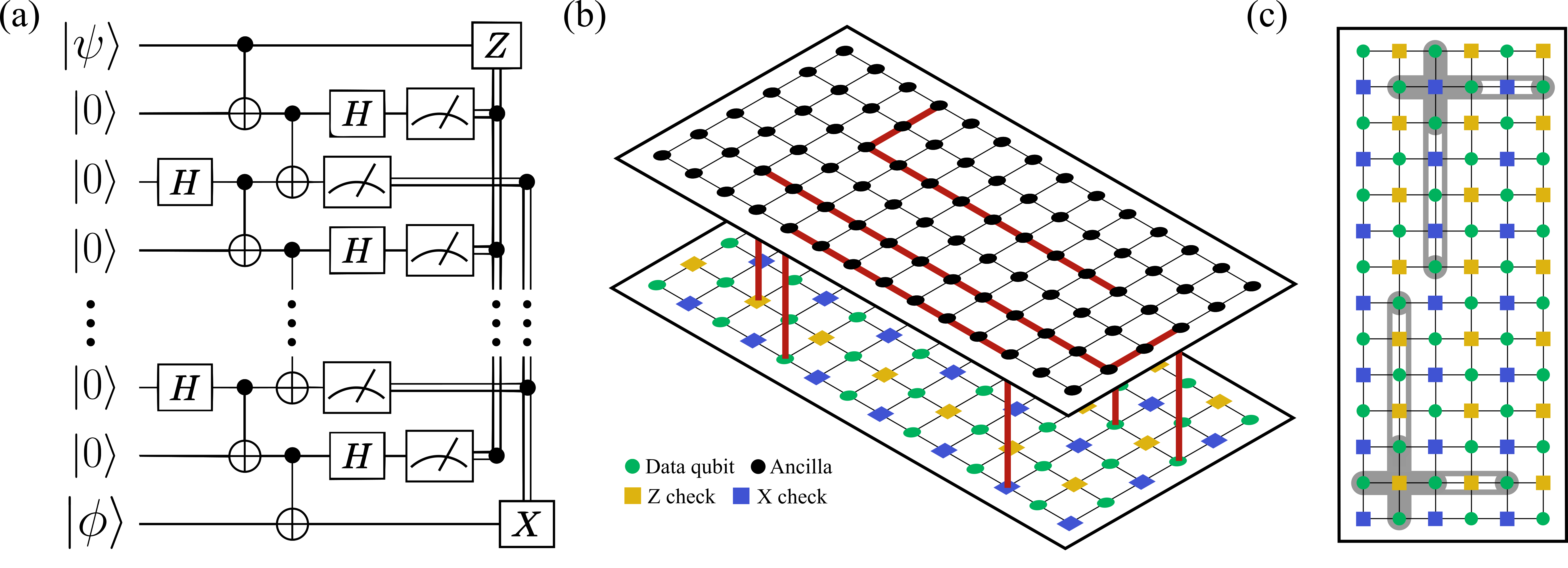}
    \caption{(a) Circuit to teleport a CNOT gate through a chain of $n$ qubits using only 2D local gates. The depth of the circuit is constant regardless of the length of the chain. 
    (b) Proposed architecture to implement nonlocal high-rate qLDPC codes using only 2D local gates. The architecture consists of two qubit layers: the bottom layer contains the data qubits and ancilla qubits allocated to perform syndrome measurements, while the top layer contains extra ancilla qubits used to perform long-range CNOT gates. Each layer has only 2D local connections, and the only connections between the layers are between qubits that are vertically adjacent. To perform a CNOT gate on two spatially distant qubits, the circuit from (a) is used along the paths of qubits highlighted in red. Multiple long-range CNOT gates may be performed in parallel, as long as the paths act on disjoint sets of qubits. 
    (c) Example embedding for a $[[42,12,2]]$ BB (error detecting) code constructed with $\ell=7, m=3$ and by matrices $A = 1 + y^2 + y$, $B = 1 + x^5 + x$. The check structure, which is identical for all checks of both types up to mirroring, translation, and boundary conditions, is highlighted in gray. }
    \label{fig:architecture}
\end{figure*}

Implementing general quantum circuits on real architectures requires compilation into a form that respects the connectivity constraints of the device.
For the 2D local architecture we consider here, performing two-qubit operations on qubits that are not adjacent requires permuting them to be so. Doing this with swap gates requires a circuit depth proportional to the distance between the qubits.
To implement stabilizer generator measurements like those shown in Fig.~\ref{fig:synd_meas}, this means that each data qubit must be moved to a position where it can interact with the check qubit, so one must wait for these permutations to complete before the eigenvalue can be measured.
This somewhat defeats the purpose of using qLDPC codes, since a single syndrome can no longer be extracted with a constant-depth circuit. 
As such, it is infeasible to perform long-range stabilizer generator measurements in this way, and we instead focus on an alternative method.

\subsection{Teleportation routing}
\label{sec:routing}
\textit{Routing} is the task of permuting packets of information, or tokens, on the vertices of a graph, using only interactions on edges of the graph. In quantum routing, the tokens are qubits, and the graph is specified by the architecture's connectivity constraints. Classical approaches to routing are typically built from swap gates \cite{Alon1994, Zhang, Chung1996}, which can also be applied naturally to routing quantum data \cite{Cowtan_2019, Childs_Schoute_Unsal_2019}. However, more general quantum operations can enable faster routing. In particular, measurement and classical feedback enable the use of entanglement swapping to distribute entanglement and perform quantum teleportation, which can achieve speed-ups over swap-based routing for many permutations and underlying graphs \cite{devulapalli2022quantum, rosenbaum, hillmich2021}, with applications including error correction~\cite{Beverland_Kliuchnikov_Schoute_2022}.

We assume the \emph{LOCC routing} model described by \textcite{devulapalli2022quantum},
where arbitrary single-qubit and disjoint two-qubit quantum gates can be implemented in a single time step,
and we have access to fast single-qubit, mid-circuit measurements, and fast classical control of single-qubit gates.
Additionally, there are a constant number of ancillary qubits for each data and check qubit, connected as attached ancillas~\cite{loss1998, lukin2007} or through stacked vertical layers (see Section~\ref{sec:synd_ext_cir}).
In LOCC routing, we can perform protocols such as entanglement swapping~\cite{PhysRevLett.71.4287} and teleportation in constant depth.
A specialization of LOCC routing that focuses on qubit and gate teleportation~\cite{Gottesman_1999} is teleportation routing.
During a single round of teleportation, we perform parallel entanglement swapping along multiple teleportation paths.
Each vertex can be involved in at most a constant number of paths, as we allow a constant number of ancillary qubits per vertex. 
In this work, we assume only one ancilla per data qubit and use the stacked vertical layers model.
This model allows direct implementation of gates between ancillas and their corresponding data qubits, as well as between ancillas whose data qubits are also directly connected (see Fig.~\ref{fig:architecture}(b)).

To perform long-range two-qubit gates, it is not necessary to actually teleport the participating qubits to adjacent locations; instead, it suffices to use the teleportation paths to implement a long-range gate with gate teleportation. The circuit shown in Fig.~\ref{fig:architecture}(a) allows us to implement arbitrarily long CNOT gates in constant quantum depth, avoiding depth overhead of swap routing and any need to reverse the operation. At the cost of utilizing ancillary qubits, this lets us extract the syndrome of a single nonlocal generator using only a constant-depth circuit.

\subsection{Stacked model}
\label{sec:stacked}
The stacked model has recently been introduced as a potential avenue to reduce overhead when implementing qLDPC codes in architectures restricted to 2D local gates~\cite{Baspin_2022, berthusen2023}.
In the stacked model, the stabilizer generators of a quantum error correcting code are partitioned into several layers depending on the size of the ball containing the qubits in its support. The lowest layer of the stack contains generators that are local, and the higher layers contain nonlocal generators whose interaction radius is some function of the system size. 
For certain codes, most of the generators are located at the bottom of the stack, i.e.~mostly local, whereas only a small fraction are large.
When restricted to 2D local gates, the set of nonlocal generators takes much longer to route and measure than the local generators. Measuring the nonlocal generators less frequently than the local ones could significantly shorten the syndrome extraction time, at the cost of potentially reduced error correction capabilities. 
Note that the layers in the stack do not correspond to physical layers on hardware; instead, they are a conceptual tool for partitioning the generators into sets based on their geometric size.

The concept of \textit{masking}~\cite{gottesman2022opportunities, berthusen2023} formalizes using an incomplete set of generators to perform error correction. 
Measuring a subset of stabilizer generators corresponds to choosing a subgroup of the stabilizer $T \subseteq S$, and the stabilizer generators that are not measured, $S \char`\\ T$, are considered to be masked. 
Error correction performance may be degraded since the resulting syndrome may have less information about the error than would be available by measuring the full set of generators.
During a circuit with $t = 1,\dots,\tau$ error correction rounds, we specify a subgroup $T_t \subseteq S$ for each round; or equivalently, we specify the generators of $S \char`\\ T_t$ that are masked. 
For generators that were previously masked, unmasking them adds them into the current subgroup, and their eigenvalues are able to be measured. A single generator may be masked and subsequently unmasked many times over the course of a circuit.

An important consideration for this model is the specific assignment of physical qubits in the architecture to data and check qubits in the code, which can be considered a type of qubit placement~\cite{shafaei2014qubit} or qubit allocation~\cite{siraichi2018qubit}.
This assignment can be thought of as an \textit{embedding} of the Tanner graph of the code in the architecture, where an embedding for a graph $G=(V,E)$ is a map $\eta: V \rightarrow \Z^D$. 
As an example, the Tanner graph for the surface code has a natural embedding into $\Z^2$ that allows for all of its generators to act on qubits within a constant radius; however, one could instead assign data and check qubits to physical qubits randomly, yielding generators that still have weight four, but are no longer local. The difficulty of implementing syndrome extraction circuits is closely related to the chosen embedding.
In Section~\ref{sec:codes}, we discuss the embedding problem for a specific class of codes.

To study the impact of nonlocality on the cost of performing syndrome measurement, we must quantify the notion of generator size and size frequency. 
We parameterize the size of a given generator as $M^\gamma$, where $0 \le \gamma \le 1$ and $M$ is the linear size of the grid. For local generators, $M^\gamma = O(1)$ implies a constant interaction radius, while the largest generators can have interaction radii $\frac{\sqrt{2}}{2}M \in \Theta(M)$ (i.e., $\gamma = 1$). 
For stabilizer codes, the number of independent stabilizer generators $r$ is related to the number of physical and logical qubits in the code like $n - r = k$. For constant-rate codes, there are thus $O(n) = O(M^2)$ independent generators, which can be parameterized like $M^{2\beta}$, with $0 \le \beta \le 1$. With $\beta=1$, we are considering the problem of measuring every generator, and with $\beta < 1$ we only consider some subset.
We can describe the set of generators as a whole by defining a function $f(\gamma)$ to characterize the distribution of generators having size $M^\gamma$. The only constraint on $f(\gamma)$ is that it is a valid probability distribution over the domain of $\gamma$, $\int_0^1 f(\gamma) d\gamma = 1$. In practice, $f(\gamma)$ will depend on the architecture, embedding, and parameters of the code family of interest~\cite{Baspin_2022, Baspin_2022_2}.

A rough estimate of the amount of work required to perform the syndrome extraction circuits for a given set of generators is simply to count the two-qubit gates, which in many cases is the leading contributor to the error budget.
In our routing model, this value is proportional to the total length of the teleportation paths when implementing long-range CNOT gates, which can be approximated as
\begin{equation}
    \text{total path length} \approx M^2\int_0^1 f(\gamma) M^\gamma d\gamma.
\end{equation}
Here, the $M^2$ factor comes from the fact that there are $O(M^2)$ generators to measure in total, and a single generator of size $\gamma$ requires a path length of $M^\gamma$. If we choose to only measure generators below a certain size $\gamma'$, this corresponds to simply evaluating the integral up to $\gamma'$. We might also want to consider measuring the smallest $x \%$ of generators, in which case one can solve $x = 100\int_0^{\gamma'} f(\gamma) d\gamma$ to find the appropriate value of $\gamma'$ and then proceed in the same way.

\section{Routing bounds \label{sec:bounds}}

Previous work by \textcite{Delfosse_Beverland_Tremblay_2021} developed lower bounds on the depth of Clifford circuits required to measure commuting Pauli operators. 
In this section, we derive similar bounds taking advantage of additional information about the geometric size of the operators. 
These bounds do not hold in general, but are instead specific to the teleportation routing model discussed in Section~\ref{sec:routing}. 
We assume there is a fixed layout of the data and check qubits that gives rise to a specific generator size distribution $f(\gamma)$. This is to avoid scenarios such as scrambled surface codes, where the difficulty of implementing the syndrome extraction circuits could be greatly reduced by permuting the qubits.

\begin{claim}
\label{claim:basic_routing}
    Let $C$ be a 2D local circuit measuring $M^{2\beta}$ commuting Pauli operators whose radii are greater than $M^\gamma$ after embedding them in an $M \times M$ grid. Then for teleportation routing,
    \begin{equation}
        \depth(C) = \Omega\big(M^{2\beta + \gamma - 2}\big).
    \end{equation}
\end{claim}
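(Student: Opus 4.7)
The plan is a global counting argument on the total number of two-qubit gates used by $C$. First I would show that for each of the $M^{2\beta}$ Pauli operators, any teleportation-routing implementation of its measurement must contain $\Omega(M^\gamma)$ two-qubit gates that are causally linked to the measurement outcome. The operator's support contains two data qubits at grid distance at least $M^\gamma$ (since the radius exceeds $M^\gamma$), and the backward quantum lightcone of the readout ancilla must reach both, or else the outcome cannot depend on the joint eigenvalue. Because every allowed two-qubit gate acts on grid neighbors and classical feedback cannot transport quantum information on its own, the ancilla and these two qubits must be connected by a chain of two-qubit gates of total length $\Omega(M^\gamma)$.

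Next I would bound the number of two-qubit gates executable in a single time step. The architecture has $M^2$ data qubits and a constant number of ancillae per vertex, so $O(M^2)$ qubits in total; each qubit participates in at most one two-qubit gate per step, giving a per-step budget of $O(M^2)$. Summing the per-operator lower bound over all $M^{2\beta}$ operators yields a total-gate lower bound of $\Omega(M^{2\beta+\gamma})$, and dividing by the per-step budget gives $\depth(C) = \Omega(M^{2\beta + \gamma - 2})$, as claimed.

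The main obstacle is justifying the per-operator lower bound while accounting for gates that may simultaneously lie in the lightcones of several different measurements, for example when two operators' supports overlap or their teleportation paths share edges. I would address this by noting that the total ``charged'' work across operators overcounts the actual number of gates by at most a constant factor: at any time step, each qubit lies on $O(1)$ active teleportation paths (bounded by the constant ancilla budget per vertex) and in the backward quantum lightcone of $O(1)$ simultaneous measurement ancillae from the current round. This constant-factor overcount is absorbed into the $\Omega$ and leaves the scaling unchanged. A final sanity check is that the distance from a qubit in the support to the operator's readout ancilla is at least its distance to the opposite extremal qubit minus the ancilla's distance to that qubit, so at least one of the two lightcone chains has length $\Omega(M^\gamma)$ and the combined chain does too.
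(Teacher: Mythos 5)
Your argument is essentially the paper's: charge each operator $\Omega(M^\gamma)$ units of routing work because its two extremal support qubits are $\Omega(M^\gamma)$ apart, bound the per-time-step budget by $O(M^2)$ using the grid's edge count and the constant ancilla budget per vertex, and divide. The lightcone justification and the discussion of shared paths are extra care beyond what the paper provides (the paper simply notes that in the teleportation model each operator consumes a dedicated path of length $\Omega(M^\gamma)$ out of the $O(M^2)$ edges available per layer), but the structure and conclusion are the same.
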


\begin{proof}
In our routing model, the maximum total length of the teleportation paths in a single time step is $O(M^2)$ since only a constant number of ancillary qubits per data qubit are allowed, and there are $\Theta(M^2)$ edges in the grid graph.
The cost of measuring an operator of size $\Omega(M^\gamma)$ is dominated by implementing the long-range CNOT gate between its two furthest qubits. Although this can be done in constant depth using a dynamic circuit (Fig.~\ref{fig:architecture}(a)), it requires a teleportation path of length $\Omega(M^\gamma)$. Consequently, routing and measuring this one operator uses $\Omega(M^\gamma)$ edges of the $O(M^2)$ available edges. 
Measuring all $M^{2\beta}$ operators thus requires $\Omega(M^{2\beta + \gamma})$ edges. 
In the best case, we utilize all available edges in each circuit layer, giving a circuit depth of $\Omega(M^{2\beta + \gamma - 2})$.
\end{proof}



In practice, it will often be the case that the edges are not optimally used, as illustrated in Fig.~\ref{fig:routing-algo-results}.
We can extend this idea to the general case of an arbitrary distribution of generator sizes.

\begin{claim}
    \label{claim:dist_routing}
    Let $C$ be a 2D local circuit measuring $M^{2\beta}$ commuting Pauli operators whose radii follow a probability distribution $f(\gamma)$ after embedding them in an $M \times M$ grid. Then for teleportation routing,
    \begin{equation}
        \label{eq:dist_eq}
        \depth(C) = \Omega\big(M^{2\beta - 2} \int_0^1 f(\gamma) M^\gamma d\gamma\big).
    \end{equation}
\end{claim}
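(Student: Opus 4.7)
The plan is to mimic the accounting argument from Claim~\ref{claim:basic_routing}, but now summing (integrating) the edge-cost contribution across the distribution of operator sizes, and then dividing by the $O(M^2)$ edges that teleportation routing can consume per time step.

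First I would set up the edge-budget inequality exactly as in the previous proof: in any single time step of a 2D local circuit on the $M \times M$ grid with a constant number of ancillas per vertex, the union of all teleportation paths being used simultaneously has total length at most $O(M^2)$, because the grid has only $\Theta(M^2)$ edges and each edge carries a constant number of paths. Therefore, over $\depth(C)$ time steps, the total edge-length budget that the circuit can spend is $O(M^2 \cdot \depth(C))$.

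Next I would lower-bound the demand side. Partition the $M^{2\beta}$ operators by their radius $M^\gamma$. By the definition of $f$ as the probability distribution of radii, the number of operators with radius in an infinitesimal bin $[\gamma, \gamma + d\gamma]$ is $M^{2\beta} f(\gamma)\, d\gamma$, and each such operator, as argued in Claim~\ref{claim:basic_routing}, requires a teleportation path of length $\Omega(M^\gamma)$ to implement the long-range CNOT between its two furthest qubits. Summing (integrating) these contributions gives a total required path length of
\begin{equation}
\Omega\!\left(M^{2\beta} \int_0^1 f(\gamma)\, M^\gamma\, d\gamma\right).
\end{equation}
Equating supply and demand, $M^2 \cdot \depth(C) = \Omega\!\bigl(M^{2\beta} \int_0^1 f(\gamma)\, M^\gamma\, d\gamma\bigr)$, which rearranges to the claimed bound.

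The main subtlety, and what I would be most careful about, is justifying the continuous sum. Strictly speaking $f$ describes a finite multiset of $M^{2\beta}$ radii, so the cleanest route is to discretize $\gamma \in [0,1]$ into $O(\log M)$ bins of the form $[\gamma_i, \gamma_{i+1}]$, apply Claim~\ref{claim:basic_routing} within each bin to get a lower bound $\Omega(M^{2\beta_i + \gamma_i - 2})$ on the depth needed for the operators of that size class, sum the edge demands across bins, and then pass to the integral expression. One should also note that the bound is a lower bound per round, so it is valid to sum demands rather than take maxima: any scheduling of operators across time steps must still pay the full aggregated edge cost. Once this is in place, the integral form in~\eqref{eq:dist_eq} follows immediately, and Claim~\ref{claim:basic_routing} is recovered as the special case $f(\gamma') = \delta(\gamma' - \gamma)$.
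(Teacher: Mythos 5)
Your proof is correct and follows essentially the same supply-versus-demand accounting as the paper: total required teleportation path length $\Omega\bigl(M^{2\beta}\int_0^1 f(\gamma)M^\gamma d\gamma\bigr)$ divided by the $O(M^2)$ edge budget per time step. Your added care about discretizing the continuous distribution $f$ is a reasonable tightening of the same argument, not a different route.
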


\begin{proof}
Just as in Claim~\ref{claim:basic_routing}, we can lower bound the circuit depth by summing the lengths of the teleportation paths required to measure the set of operators. We now have operators of different sizes, where the fraction of operators of a certain size is determined by the probability distribution $f(\gamma)$. 

Thus, for a given $\gamma$, there are a number of operators proportional to $f(\gamma)M^{2\beta}$ that each require $M^\gamma$ edges to measure. Since $0 \le \gamma \le 1$, the total teleportation path length needed to route and measure every operator is 
\begin{equation}
M^{2\beta} \int_0^1 f(\gamma) M^\gamma d\gamma.    
\end{equation}
Since we again have $O(M^2)$ edges in the grid available in each layer of the circuit, the total circuit depth is lower-bounded as in Eq.~\eqref{eq:dist_eq}, as desired.
\end{proof}


\subsection{Greedy routing}
\label{sec:greedy}
Swap routing is a straightforward approach to compiling circuits for quantum hardware with interaction constraints.
Practically, this can be done using an algorithm that tries to perform the circuit using as few swap gates as possible~\cite{li2019tackling, Childs2019, Cowtan_2019, herbert2020depth}.  
As mid-circuit measurement and long-range entanglement generation become more reliable~\cite{baumer2023efficient}, teleportation routing may become a more viable option to move qubits and perform long-range gates. 
Here, we present a simple, greedy algorithm to route an arbitrary set of operators under the routing and architecture assumptions of Sections~\ref{sec:routing} and~\ref{sec:arch}, respectively.
An operator consisting of a tensor product of single-qubit Paulis, such as a stabilizer generator, can only be measured once each qubit in its support has been routed. That is, a teleportation path is prepared and a long-range entangling gate is applied between the qubit and a readout ancilla qubit. 
Once all required gates have been applied, the operator is said to have completed routing, and the readout qubit can be measured to obtain the eigenvalue of the operator.
The algorithm is described below in Algorithm~\ref{alg:greedy}. 

\begin{algorithm}[H] 
	\caption{Greedy routing} 
    \label{alg:greedy}
	\begin{algorithmic}[1]
        \While {there are still operators to measure}
            \State Sort the operators in decreasing order according to how many of their qubits have completed routing.
            \For{incomplete operator $o_i = 1,2,\ldots$}
                \For{qubits $j = 1,2,\ldots$ of operator $o_i$}
                    \State Use breadth-first search to find a teleportation path for qubit $j$ to the corresponding readout ancilla qubit.
                    \State If no path exists, continue.
                \EndFor
            \EndFor
            \State Perform long-range entangling gates on qubits that found a teleportation path.
            \State Measure the readout qubit of operators that have completed routing.
        \EndWhile

	\end{algorithmic} 
\end{algorithm}

The circuit operations of a single iteration can be executed in parallel, so each iteration performs only a constant-depth circuit. Therefore, the total circuit depth of the routing procedure is proportional to the number of iterations.
Instead of minimizing gate count, the intent of this algorithm is to minimize the circuit depth---and saturate the bound of Claim~\ref{claim:dist_routing}---by maximizing the usage of teleportation paths. 
This is only possible if the partial measurements between iterations commute, such as when measuring the generators of a single type in a CSS code, in the standard surface code syndrome extraction circuit~\cite{Tomita_2014}, or in the depth-7 BB code measurement circuit~\cite{bravyi2023highthreshold}. 
The syndrome extraction circuits we use route every $Z$-type check and then route every $X$-type check.

\begin{figure}[b]
    \centering
    \includegraphics[width=\linewidth]{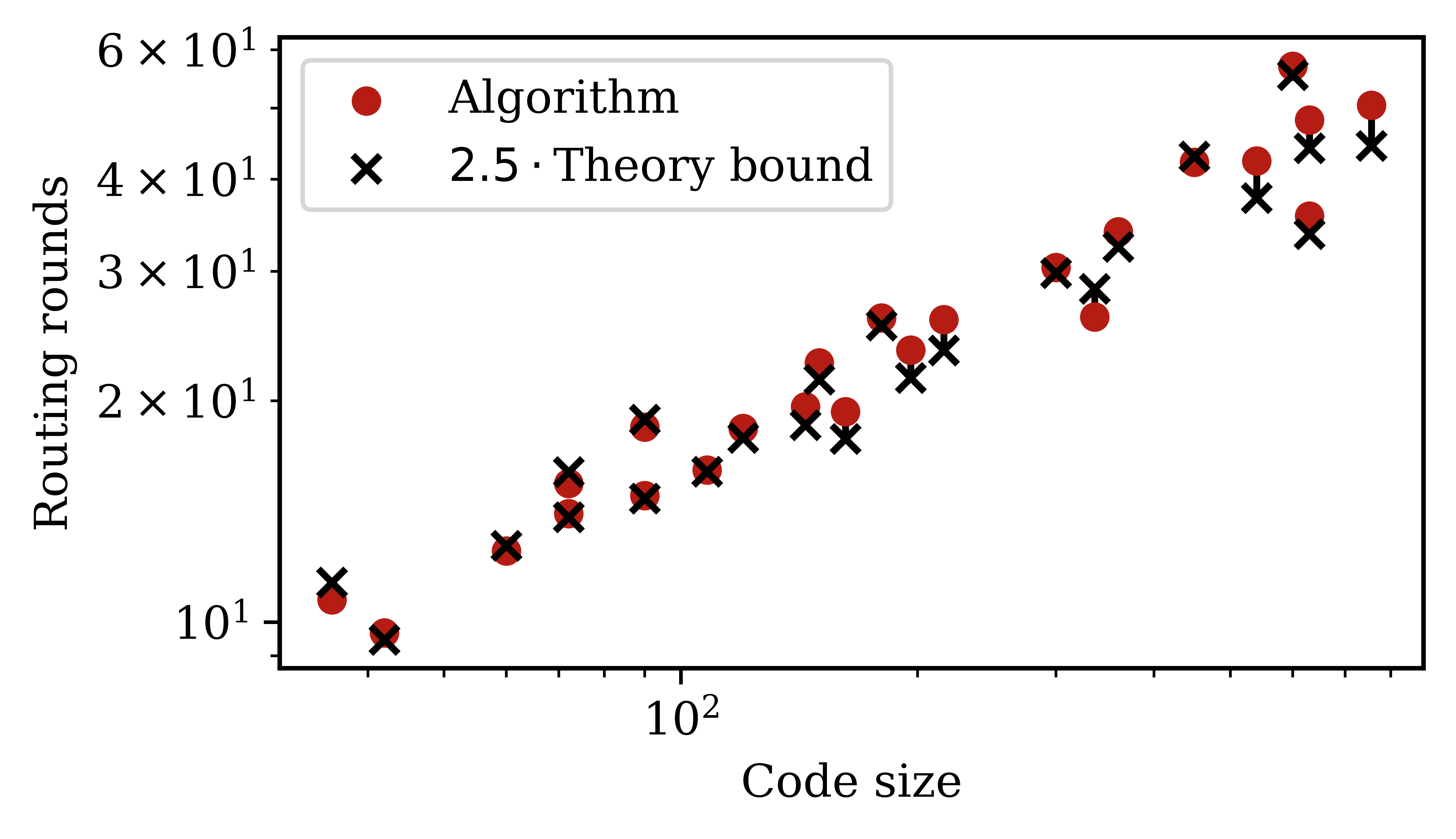}
    \caption{Depth from greedy routing versus $2.5 \times$ the theoretical optimal depth to route the $X$-type generators using a single layer of ancillary qubits. Code examples are drawn randomly from the family of BB qLDPC codes (see Section~\ref{sec:codes}). }
    \label{fig:routing-algo-results}
\end{figure}

To benchmark the performance of the algorithm, we draw random examples of BB codes (see Section~\ref{sec:codes}) and route the $X$-type generators while restricted to a single layer of ancillary qubits.
For comparison, we compute the optimal routing depth according to Claim~\ref{claim:dist_routing}.
Figure~\ref{fig:routing-algo-results} shows the results of these simulations, providing evidence that the greedy routing algorithm nearly saturates Eq.~\eqref{eq:dist_eq}. To obtain the constant multiple in Fig.~\ref{fig:routing-algo-results}, we consider the smallest eight code instances and perform a fit between the asymptotic lower bound and the depth returned from the greedy routing algorithm. This constant times the theory lower bound matches closely with the routing time of small code instances; however, we begin to see the algorithm routing depth deviating as we increase the block length, indicating non-optimal performance. For code sizes of practical interest, this algorithm may be a viable option to optimize teleportation routing. 
Certain codes, such as the BB codes we discuss in the next section, have additional structure that allows us to manually find routing schedules that outperform those found by the greedy algorithm.

\section{Bilayer architecture}
\label{sec:circuit_level_sims}

\subsection{Bivariate bicycle codes}
\label{sec:codes}

\renewcommand{\arraystretch}{1.25}
\setlength{\tabcolsep}{8pt}
\begin{table*}[t]
\begin{tabular}{c|c|c|c|c|c|c}
\hline
$[[n,k,d]]$  & $\ell,m$ & $A$            & $B$   & Embedding & Mask percent    & Routing steps       \\
\hline\hline
$[[72,8,6]]$ & 12,3   & $x^{9}+y^1+y^2$ & $1+x^{1}+x^{11}$ &$\langle A_2A_3^T, B_1B_2^T\rangle$& 25\% & 11,6\\
\hline
$[[90,8,6]]$   & 9,5   & $x^{8}+y^4+y$ & $y^5+x^{8}+x^7$ &$\langle A_2A_3^T, B_2B_1^T\rangle$ & 22.22\% & 9,5\\
\hline
$[[120,8,8]]$ & 12,5 & $x^{10}+y^4+y$ & $1+x+x^2$ &$\langle A_2A_3^T, B_1B_2^T\rangle$& 25\%  & 11,6\\
\hline
$[[150,8,8]]$ & 15,5 & $x^5+y^2+y^3$ & $y^2+x^7+x^6$ & $\langle  A_1A_2^T, B_1B_3^T\rangle$ & 26.66\%  & 11,6  \\
\hline
$[[144,12,12]]$ & 12,6 & $x^3+y+y^2$ & $y^3+x+x^2$ & $\langle  A_2A_1^T, B_1B_3^T\rangle$ & 33.33\%  & 12,8 \\
\hline
$[[196,12,8]]$ & 14,7 & $x^6+y^5+y^6$ & $1+x^4+x^{13}$ & $\langle  A_2A_3^T, B_1B_2^T\rangle$ & 35.71\%  &  16,15\\
\hline
\end{tabular}
\caption{Examples of BB qLDPC codes found through a computer search. Code distances were computed using the QDistRnd GAP package~\cite{Pryadko_2022}, with 1000 information sets and \textsc{mindist = 0} to obtain the actual distance. The `Embedding' column reports the specific embedding into $\mathbb{Z}^2$ used for that code (see Appendix~\ref{apx:embedding}). The `Mask percent' column denotes the percentage of generators that are ``large", i.e., induced by the long boundary and masked during a portion of the error correction rounds. The `Routing' steps column indicates the number of routing rounds required to route, purify, and measure the short-range and long-range generators, respectively. Algorithm~\ref{alg:greedy} was not used to determine the circuits; instead, the repeated generator structure of the BB codes allowed us to find circuits by hand. The actual circuit depth is $11\times$ greater due to the Bell pair generation (depth 6), purification (depth 2), and implementation of the long-range CNOT gate (depth 3).}
\label{tab:codes}
\end{table*}

In this work, we investigate the recently introduced bivariate bicycle qLDPC codes~\cite{bravyi2023highthreshold}, which come from the wider family of generalized bicycle codes~\cite{Kovalev2013}.
Let $\mathbb{I}_\ell$ be the $\ell \times \ell$ identity matrix and let $S_\ell$ be the $\ell \times \ell$ cyclic permutation matrix, which is obtained by shifting the columns of $\mathbb{I}_\ell$ one position to the right. Also let
\begin{equation}
    \label{eq:xy}
    x = S_\ell \otimes \mathbb{I}_m \quad\text{and}\quad y = \mathbb{I}_\ell \otimes S_m
\end{equation}
for integers $\ell, m$. We then define two matrices
\begin{equation}
    \label{eq:ab}
    A = A_1 + A_2 + A_3 \quad\text{and}\quad B = B_1 + B_2 + B_3
\end{equation}
where $A_i, B_i$ are powers of $x$ or $y$. Here we perform all arithmetic over $\mathbb{Z}_2$. Using $A$ and $B$, we can construct the CSS-type BB code $QC(A,B)$ with $X$- and $Z$-parity checks that, respectively, take the form
\begin{equation}
    \label{eq:parity_check}
    H_X = [A | B] \quad\text{and}\quad H_Z = [B^T | A^T].
\end{equation}
To define a valid stabilizer code, we require that all $X$-type checks commute with all $Z$-type checks, which translates to the condition $H_X \cdot H_Z^T = AB + BA = 0$. Since $[x,y] = 0$, this condition is satisfied. 

For certain choices of $A_i$ and $B_i$, the resulting BB code has an embedding into $\mathbb{Z}^2$ that yields checks which act on four nearest-neighbor qubits and two distant qubits (see Appendix~\ref{apx:embedding}). 
Another useful property of generalized bicycle codes is the repeated parity check structure: given one check, other checks of the same type can be obtained with vertical and horizontal shifts on the grid, up to periodic boundary conditions. Opposite-type checks are obtained by mirroring and again performing horizontal and vertical shifts. 
Fig.~\ref{fig:architecture}(c) shows an example of an embedding for a $[[42,12,2]]$ code constructed with $\ell=7, m=3$ and by matrices $A = 1 + y^2 + y$, $B = 1 + x^5 + x$. The check structure for the weight-6 $X$- and $Z$-type generators is indicated by the gray outline.

These natural embeddings make it straightforward to search for codes where the check structure is geometrically small.
While the checks are not entirely local due to the two nonlocal qubits in their support, appropriately choosing $\ell$ and $m$ can make the periodic boundary conditions induce generators that are comparatively much larger. This can be done by letting $\ell \gg m$, as illustrated in Fig.~\ref{fig:lrsr}. 
In the resulting generator distribution, the majority of the checks are geometrically small.
In the context of the stacked model, the generators that are induced by the boundary conditions are those that are measured less frequently. 

Table~\ref{tab:codes} lists BB codes found by computer search which, through simulations similar to those of Ref.~\cite{berthusen2023}, display good numerical performance. For each code, every valid embedding was simulated in a simplified version of the protocol in Section~\ref{sec:results} in order to find the embedding that yielded the best masked error correction performance. Choosing an embedding determined the percentage of generators induced by the long boundary. This percentage is listed in Table~\ref{tab:codes} in the Mask percent column.
To our knowledge, the codes presented here are new, with the exception of the $[[144,12,12]]$ code, which was reported in Ref.~\cite{bravyi2023highthreshold}.

\subsection{Syndrome extraction circuits}
\label{sec:synd_ext_cir}

As detailed in Section~\ref{sec:arch}, the main difficulty in implementing nonlocal qLDPC codes on 2D local architectures is the need to perform nonlocal two-qubit operations. To address this issue, we propose a physical implementation based on the teleportation routing model described in Section~\ref{sec:routing}. The architecture, as depicted in Fig.~\ref{fig:architecture}(b), consists of two layers of qubits. The bottom layer contains the data qubits and ancillary qubits to perform syndrome measurements (check qubits), laid out using an embedding that maximizes decoding performance while minimizing the number of long-range generators. The top layer contains ancilla qubits to aid in the implementation of long-range CNOT gates. In each layer, the only allowed two-qubit operations are between neighboring qubits, and operations between layers are only allowed between qubits that are vertically adjacent, i.e., at the same $(x,y)$ location. 

A bilayer architecture is a feasible design requirement for several types of quantum computers.
As discussed in Ref.~\cite{bravyi2023highthreshold}, it is difficult, yet not unreasonably so, to modify the current generation of superconducting hardware to support a second layer. 
In movement-restricted neutral-atom devices, one option is to use dual-species Rydberg arrays~\cite{Zeng_2017, Singh_2022, anand2024dualspecies}, where the data layer is made up of one species, and the ancilla layer the other.
Alternatively, for single-species arrays, it may be practical to store multiple qubits per atom, using a combination of nuclear and electronic~\cite{Gorshkov_2009, lis2023midcircuit} or motional qubits~\cite{scholl2023erasurecooling}.

To implement a CNOT gate between a data qubit and a distant check qubit, we use the constant-depth circuit shown in Fig.~\ref{fig:architecture}(a). A number of ancilla qubits equal to the length of the CNOT gate are needed, and so qubits from the upper layer are utilized, as illustrated in Fig.~\ref{fig:architecture}(b). Multiple long-range CNOT gates may be performed in parallel as long as the paths act on disjoint sets of qubits. Given a set of CNOT gates to perform, an order that attempts to minimize the total depth of the circuits can be found using the greedy routing algorithm introduced in Section~\ref{sec:greedy}. 
Alternatively, we can utilize the repeated check structure of the BB codes to manually come up with highly parallelized orderings;
Fig.~\ref{fig:purifcation} shows an example of how the Bell pairs needed for the long-range CNOT gates (red highlighted paths) can be implemented in parallel. Also see Fig.~\ref{fig:routing_schedule}. The `Routing steps' column in Table~\ref{tab:codes} indicates the number of routing rounds required to route, purify, and measure the short-range and long-range generators, respectively, for these hand-designed orderings.
This repeated parity check structure is also useful for implementing generalized bicycle codes with reconfigurable atom arrays~\cite{viszlai2023matching} and bosonic cat qubits~\cite{ruiz2024ldpccat}.

Remote CNOT gates implemented in this way have an error rate proportional to the length of the chain. For short distances, the resulting error rate is not much worse than the native two-qubit CNOT error rate; however, larger chains will be prohibitively noisy. To remedy this, we can apply entanglement purification~\cite{Bennett_1996, Deutsch_1996} to the noisy Bell pairs in the ancilla layer. Figure~\ref{fig:purifcation} outlines the original purification scheme as proposed by Bennett \textit{et al.}~\cite{Bennett_1996}. The protocol uses additional `donor' Bell pairs (pink highlighted paths) to create `source' Bell pairs (red highlighted paths) with higher fidelity. This is done by performing CNOT gates between the ends of the source and donor pairs, measuring the ends of the donor pairs in the computational basis, and then comparing the measurement results classically. If the results agree, the source Bell pair is kept; otherwise it is discarded. Averaging over cases where the source Bell pair is kept, it has a higher fidelity than an unpurified pair; however, in the cases where it is discarded, the corresponding long-range CNOT gate cannot be performed. 
We either have the option of reattempting the purification process, implementing the long-range CNOT with the flawed Bell pair, or giving up on the long-range CNOT gate (and ultimately the corresponding generator syndrome measurement) altogether. Since we already intend not to measure every generator at every error correction round, this last option is most appropriate. In the context of the bilayer architecture, both donor and source Bell pairs are routed through the ancilla layer. Practically, this means that fewer long-range CNOT gates can be implemented in parallel, since the purification process uses additional teleportation paths.

\begin{figure}[t]
    \centering
    \includegraphics[width=\linewidth]{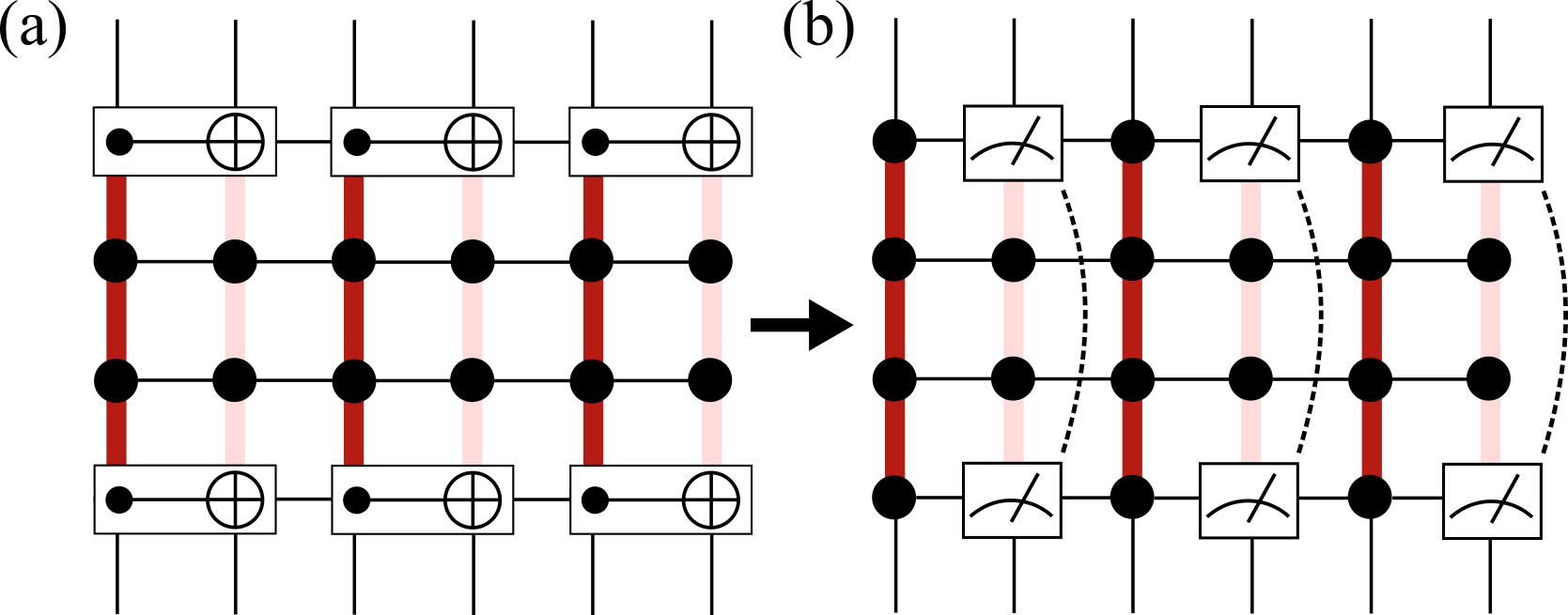}
    \caption{Implementing multiple long-range Bell pairs in parallel for a BB code. The `source' red highlighted Bell pairs are purified using the Bennett protocol~\cite{Bennett_1996}. (a) CNOT gates are performed between each end of the source and the pink `donor' Bell pairs. (b) Each end of the donor Bell pair is measured and the results compared classically. If the measurements agree, the source Bell pair is kept and used; otherwise it is discarded. }
    \label{fig:purifcation}
\end{figure}

Although we now have a way to implement long-range CNOT gates, measuring every stabilizer generator in this manner incurs additional overhead (see Section~\ref{sec:results}). Instead, we can reduce the time overhead by applying the stacked model and choosing to measure the costly, large generators less frequently than the smaller ones. The frequency at which the long-range generators are measured can be tuned, with more frequent measurements potentially correcting more errors but increasing the time needed to implement error correction.

In the phenomenological noise model, depolarizing errors are introduced with probability $p$ only at the beginning of each error correction round. The syndrome is then noiselessly computed using the parity check matrix and the randomly drawn errors. Additional errors may be introduced to the syndrome to represent measurement errors. To correct the qubit errors, the syndrome is given as input to a decoder which attempts to deduce the most likely error. Decoding is considered a success if the guessed error is equivalent to the actual error up to a stabilizer element.

In reality, errors may occur at any operation in the syndrome extraction circuit, including qubit initialization, single- and two-qubit gates, measurements, and idle locations. To model this, we instead consider the standard circuit-based depolarizing noise model~\cite{Fowler2009}, where for each operation in the circuit, an error is introduced with some probability $p$. For example, an error arising from a CNOT gate is the gate followed by one of the possible 15 non-identity two-qubit Pauli products on the control and target qubits. Although it is possible to decode circuit-level noise using the same method as for phenomological noise, it has been shown to be advantageous to instead use a space-time circuit-level decoder~\cite{xu2023constantoverhead, Wang_2011}. Here, the goal is to guess the error at specific locations in the syndrome extraction circuit. Decoding is considered a success if the guessed errors have the same effect on the logical observables as the actual error.

The input to the space-time decoder is not the syndrome of the error, but rather the parities of the syndrome measurements between error correction rounds.
In the absence of errors, the syndrome between rounds should be constant, i.e.~have parity of zero. A parity of one indicates that an error occurred at some point in the previous error correction round. Following the notation of Stim~\cite{gidney2021stim, McEwen_2023}, we define the $i$th \textit{detector} at time $t$ to be the parity of the syndrome of the current and previous rounds $D_i^{(t)} = \sigma_i^{(t)} \oplus \sigma_i^{(t-1)}$, where $\sigma_i^{(t)}$ is the $i$th bit of the syndrome at time $t$. However, in the stacked model, we have the possibility of neglecting to measure certain generators for some number of rounds, $t_m$. As such, detectors for these generators must compare the parities of the corresponding syndromes $t_m$ rounds apart, $D_i^{(t)} = \sigma_i^{(t)} \oplus \sigma_i^{(t-t_m)}$. Each detector allows us to determine whether errors have occurred in a specific detecting region~\cite{McEwen_2023} of the circuit. Figure~\ref{fig:detecting}(a) shows a simple example of a classical repetition code circuit with its associated detectors and highlighted detecting regions. 

\subsection{Space-time decoder}
\label{sec:space_time}
\begin{figure}[!t]
    \centering
    \includegraphics[width=\linewidth]{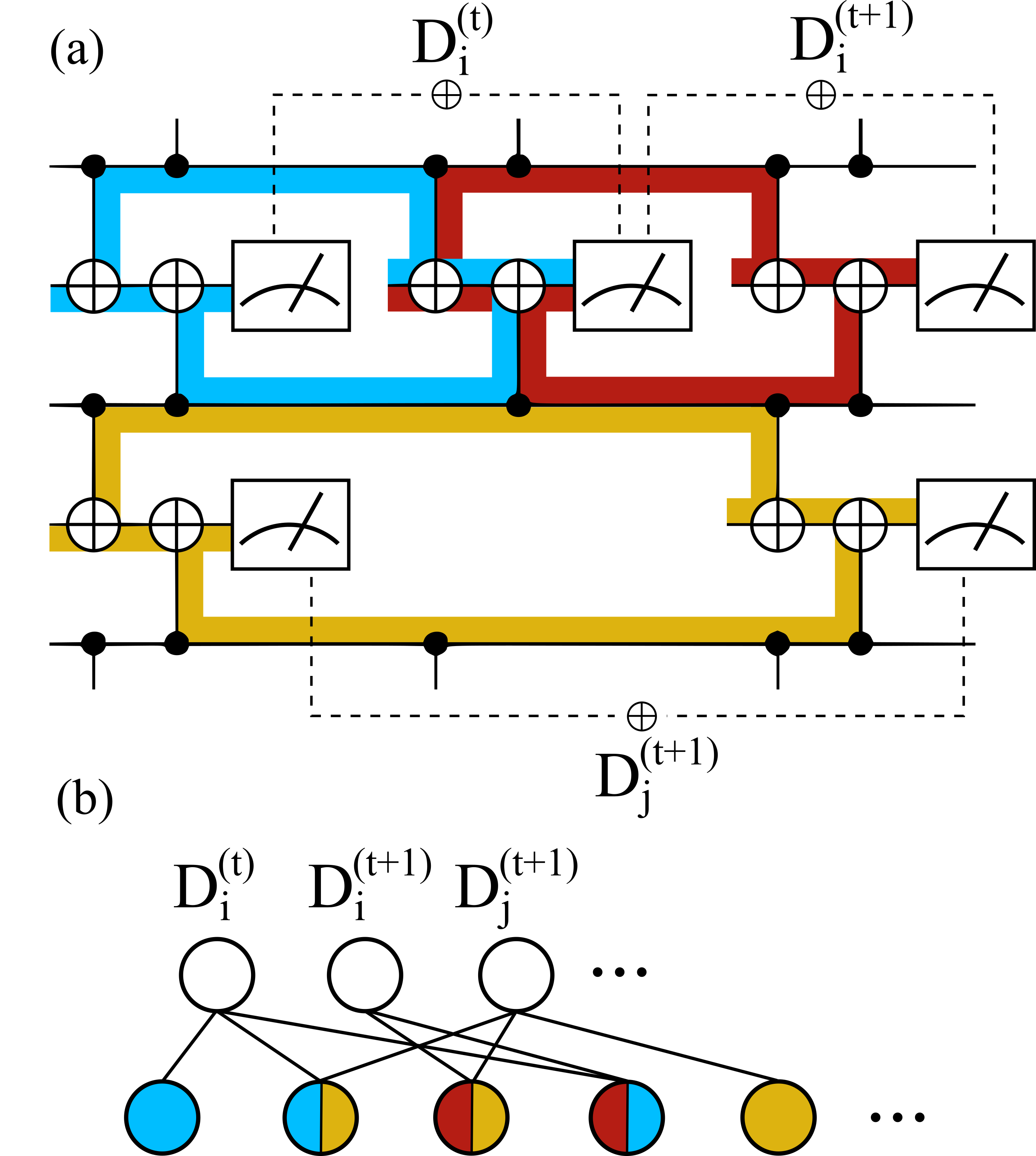}
    \caption{(a) Detectors for a portion of a bit-flip repetition code. The highlighted regions represent the detecting region~\cite{McEwen_2023} of a detector, the set of errors that would cause the detector to be triggered. The corresponding detectors are then the parities of the measurements in that region. Since syndrome $j$ was masked for a round, the detector now represents the parities of the measurements in the region that spans three rounds. (b) The bipartite space-time decoding graph of the circuit. The check nodes of this graph are the detectors, and the bit nodes are possible errors during the execution of the circuit. A detector and error are connected by an edge if the error causes the detector to be activated. Errors on the boundary of two detecting regions cause both detectors to trigger.}
    \label{fig:detecting}
\end{figure}

To correct for errors in the circuit-level model, we relate the detectors with errors in the circuit by constructing a bipartite graph.
Let the detectors over $T$ rounds be the check nodes, and let every possible single- and two-qubit error over the circuit make up the bit nodes. A detector and error are connected by an edge if the error causes the detector to activate. As a practical note, many errors have the same action on the detectors and logical observables, so they can be consolidated into a single node. Since each error in this set has the same action on the final logical observables, one can choose an arbitrary representative when checking for decoding success. Similarly, some errors will have no effect on the detectors or logical observables, and as such are not included as a node in the bipartite graph. This bipartite graph can be considered the Tanner graph of a classical code and can be decoded by any appropriate decoder to deduce the errors that have occurred. Figure~\ref{fig:detecting}(b) shows the bipartite decoding graph corresponding to the circuit of panel (a). The classes of equivalent errors from each detecting region constitute the bit nodes of the graph and are connected by edges to the appropriate detectors. 
For a more detailed discussion of the circuit-level noise decoding process, see Ref.~\cite{bravyi2023highthreshold}.

\subsection{Circuit-level simulations}
\label{sec:results}

We now present the results of circuit-level error correction simulations using the class of BB quantum LDPC codes and the architecture defined in Section~\ref{sec:arch}.
Previous simulations of BB codes showed that they greatly outperformed surface codes in terms of overhead under specific architecture assumptions~\cite{bravyi2023highthreshold, viszlai2023matching}.
Here, we show that BB codes implemented with 2D local gates in the proposed bilayer architecture have comparable performance to surface codes which encode the same number of logical qubits and have roughly the same number of physical qubits.

For the following simulations, we use Stim~\cite{gidney2021stim} to construct the circuits and build the space-time bipartite graph used for decoding.
As such, we consider a circuit-level noise model in which errors occur independently on different circuit operations. For a physical error rate $p$---in this work we consider $p=0.1\%$---single-qubit gates have probability $p/10$ of experiencing the single-qubit depolarizing channel;
two-qubit gates have probability $p$ of experiencing the two-qubit depolarizing channel;
measurement results have probability $p$ of being flipped;
qubit reset operations have probability $p/10$ of preparing the $\ket{1}$ state instead of the $\ket{0}$ state;
and idle qubits experience a depolarizing channel with probability $p/50$.
The assumed single-qubit, two-qubit, and measurement error rates are comparable to the performance of current ion-trap~\cite{moses2023, baldwin2022, ionq2024} and superconducting~\cite{ibm2021} quantum computers. However, this last condition on the idle qubit error rate is somewhat optimistic and is around an order of magnitude better than the idle error seen on production devices. We comment on this assumption in Section~\ref{sec:discussion}.

\begin{figure}[t]
    \centering
    \includegraphics[width=\linewidth]{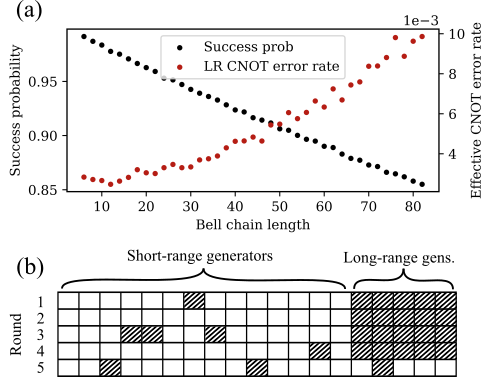}
    \caption{(a) Results of circuit-level simulations of the entanglement purification protocol of Ref.~\cite{Bennett_1996} for Bell pairs of increasing length. Two long-range Bell pairs are created using a noisy circuit similar to that of Fig.~\ref{fig:architecture}(a) and then purified with the noisy circuit depicted in Fig.~\ref{fig:purifcation}. The success probability of the purification and the resulting Bell purity if successful is shown for $100\,000$ samples. (b) Example depiction of generator masking (indicated by a hatched fill) over several error correction rounds being affected by the entanglement purification protocol failing. In this example, the long-range generators are unmasked after five rounds. }
    \label{fig:bell_purity}
\end{figure}

For ease of implementation, we first separately perform circuit-level simulations of the entanglement purification protocol. The simulation consists of implementing two noisy long-range Bell pairs using a circuit similar to that depicted in Fig.~\ref{fig:architecture}(a) and then performing Bennett \textit{et al.}'s entanglement purification protocol on the two pairs. 
In this simplest version of the protocol, failures are not reattempted, and only a single donor Bell pair is used.
Simulating the protocol many times allows us to estimate the probability that the purification protocol succeeds and, if so, the fidelity of the purified Bell pair.
Fig.~\ref{fig:bell_purity}(a) displays the results of these simulations for long-range Bell pairs of different lengths under the circuit-level error model described above. 

During syndrome extraction, if the entanglement purification protocol for any of the long-range CNOT gates fails, we mask the corresponding generator instead of reattempting the purifications. We can then estimate the probability that the syndrome of a generator is available, that is, all the required purifications for that generator succeed. 
If the purifications do succeed, then we can also estimate the error rate of the resulting long-range CNOT gate from the fidelity of the Bell pair. In the full circuit, we then implement a direct CNOT with this error rate to represent the entire procedure.
Fig.~\ref{fig:bell_purity}(b) illustrates what this means practically: assuming the long-range generators are unmasked every five rounds, the first four rounds have these long-range generators masked (hatched fill). 
Additionally, due to failures of the entanglement purification protocol, some short-range generators are also masked, even though we had planned for them to always be available. We note that these random failures are not expected to greatly impact the performance of the code, as it is unlikely that one generator will fail several rounds in a row. Thus, even if there are missed errors, they will likely be corrected when the generator does succeed in routing.
In the fifth round, the long-range generators are unmasked and attempted to be measured, but only if purifications succeed can we actually obtain their syndromes.
Note that with this simple purification scheme, the long-range generators are less likely to succeed, since the necessary Bell pairs are between more distant qubits and more prone to failure.

For the full error correction protocol, we begin each circuit with a single noiseless round to initialize the logical subspace. We then perform $t$ noisy error correction rounds using the syndrome extraction circuits defined in Section~\ref{sec:synd_ext_cir}. 
As the short-range generators are easier to measure, we attempt to measure them every round, whereas the costly, long-range generators are unmasked and attempted every five rounds.
As described above, we additionally mask both the short- and long-range generators with probability equal to that of at least one of required purifications failing. In the cases where all purifications for a single generator succeed, we apply the two-qubit depolarizing channel after each CNOT gate with an error rate equal to that of a long-range CNOT gate performed using a Bell pair of the appropriate distance.
Idling error rates are estimated using the number of steps needed to route and purify the source and donor Bell pairs for a given set of generators (see Fig.~\ref{fig:routing_schedule} and Table~\ref{tab:codes}). As each step consists of Bell pair generation (depth 6), purification (depth 2), and implementation of the long-range CNOT gate (depth 3), the actual circuit depth is $11\times$ greater. To represent idling errors, a depolarizing channel is applied at the beginning of each error correction round to every qubit with probability equal to the total circuit depth times the idle error rate.
Additionally, a depolarizing channel is applied to every qubit with probability $p=0.1\%$ at the beginning of each round.
Before measuring the logical observables, we noiselessly extract the full syndrome one last time. The corresponding space-time bipartite graph is then generated, and the errors are sampled and decoded.

In this work, we use a decoder based on belief propagation and ordered-statistics decoding (BP-OSD)~\cite{Panteleev_2021, Roffe_2020, Roffe_LDPC_Python_tools_2022}, which consists of the min-sum BP decoder followed by an order-10 combination-sweep OSD postprocessing step. 
Performing real time decoding using BP and higher-order OSD postprocessing may be infeasible within the fast cycle time of superconducting quantum computers; however, it was shown that good decoding performance for BB codes can be achieved while using less computationally expensive OSD parameters~\cite{scruby2024highthresholdlowoverheadsingleshotdecodable}.

\begin{table}[t]
\begin{tabular}{c|c|c}
$[[n,k,d]]$ & Qubits & $\epsilon_L$ \\
\hline
  $[[128,8,4]]$ & 248 &  $1.4\times 10^{-3}\pm 1.2\times 10^{-6}$  \\
  \hline
  $\mathbf{[[72,8,6]]}$     &  288      &    $1.6\times 10^{-3}\pm 3.0\times 10^{-5}$          \\
  \hline
  $\mathbf{[[90,8,6]]}$     &  360      &    $8.9\times 10^{-4}\pm 2.0\times 10^{-5}$          \\
  \hline
    $[[200,8,5]]$ & 392 &  $2.0\times 10^{-4}\pm 6.5\times 10^{-7}$ \\
  \hline
  $\mathbf{[[120,8,8]]}$    &  480      &    $1.2\times 10^{-4}\pm 2.0\times 10^{-6}$          \\
  \hline
  $[[288,8,6]]$ & 568 &  $9.5\times 10^{-5}\pm 2.5\times 10^{-7}$ \\
  \hline
  $\mathbf{[[150,8,8]]}$    &  600      &    $5.3\times 10^{-5}\pm 1.3\times 10^{-6}$          \\
  \hline
  $[[392,8,7]]$ & 776 &  $2.0\times 10^{-5}\pm 1.5\times 10^{-7}$ \\
  \hline
  \hline
  $\mathbf{[[144,12,12]]}$ & 576 & $1.6\times 10^{-4}\pm 4.6\times 10^{-6}$ \\
  \hline
  $[[300,12,5]]$ & 588 & $3.0\times 10^{-4}\pm 9.8\times 10^{-7}$ \\
  \hline
  $\mathbf{[[196,12,8]]}$ & 784 & $7.9\times 10^{-5}\pm 2.3\times 10^{-6}$ \\
  \hline
  $[[432,12,6]]$ & 852 & $1.4\times 10^{-4}\pm 3.7\times 10^{-7}$ \\
  \hline
  $[588,12,7]]$ & 1164 & $2.9\times 10^{-5}\pm 2.3\times 10^{-7}$ \\
  \hline
\end{tabular}
\caption{Code parameters, total number of qubits used, and $\epsilon_L$ as extracted from Eq.~\eqref{eq:logical_per_round} for the simulations described in Section~\ref{sec:results}. Code parameters shown in bold correspond to BB code instances. Code parameters not in bold correspond to copies of the rotated surface code. }
\label{table:code_results}
\end{table}

\begin{figure}[t]
    \centering
    \includegraphics[width=\linewidth]{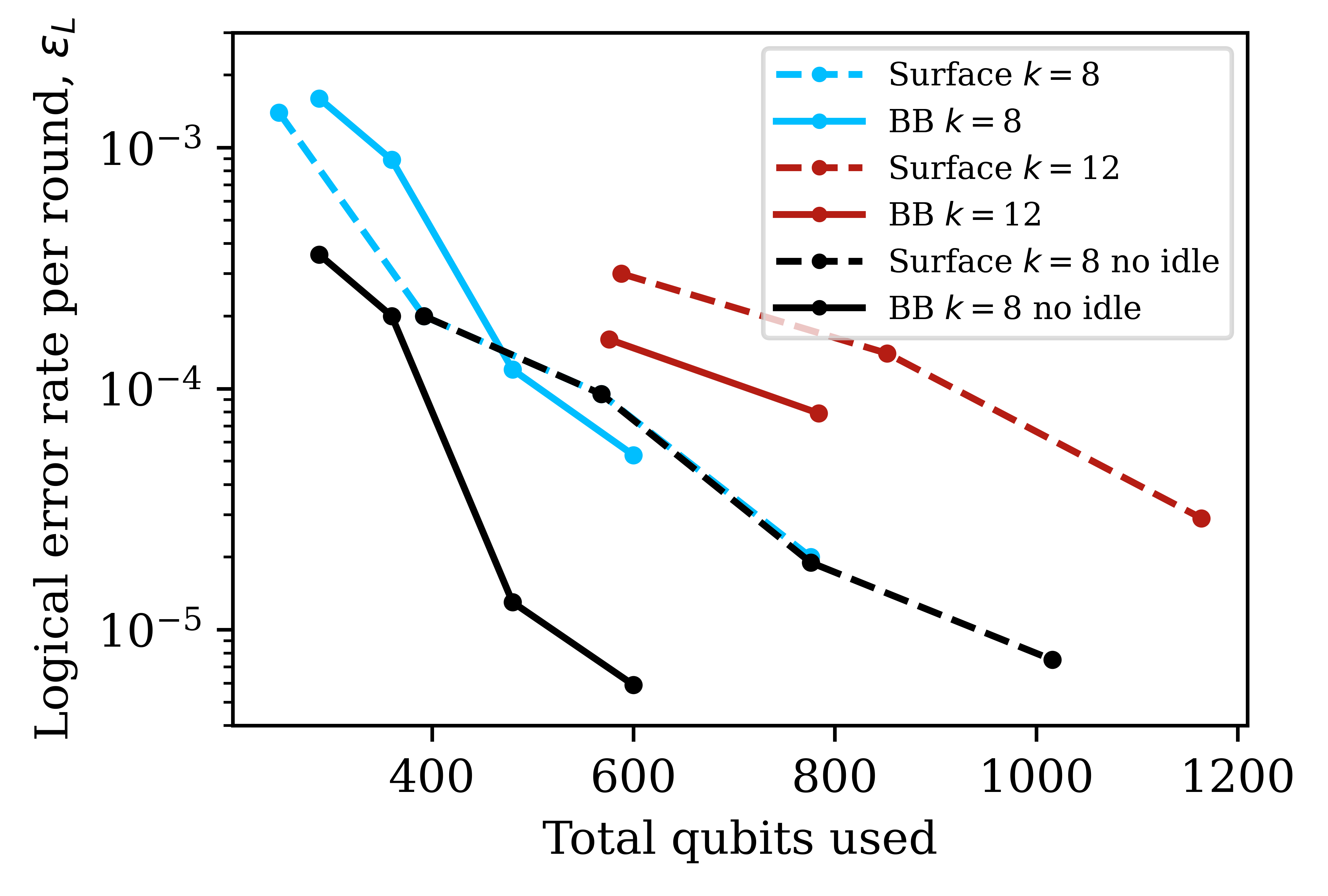}
    \caption{Extracted logical error rate per round, $\epsilon_L$ as a function of the total number of qubits used (data qubits plus all ancilla qubits) for several BB and surface code instances. The data is tabulated in Table~\ref{table:code_results}. We also include simulation results in the no idle error regime, as indicated by the black lines; these results are tabulated in Table~\ref{table:code_results2}.}
    \label{fig:space_vs_epsilon}
\end{figure}

Table~\ref{table:code_results} and Fig.~\ref{fig:circuit_level}(a)--(b) shows the results of these simulations for several codes listed in Table~\ref{tab:codes}.
As a comparison, we perform the same simulations with the rotated surface code which has parameters $[[d^2, 1, d]]$. 
To decode, we follow the same process as described in Section~\ref{sec:space_time} but instead use the minimum-weight perfect matching decoder~\cite{higgott2023sparse}.
As the BB codes encode multiple logical qubits in a single block, multiple copies of the surface code must be used to achieve the same number of logical qubits. If $p_{SC,1}$ is the logical error rate of simulating a single rotated surface code for $t$ error correction rounds, then $k$ copies of the surface code have a logical error rate
\begin{equation}
    p_{SC,k} = 1 - (1-p_{SC,1})^k.
    \label{eq:surface}
\end{equation}
In addition to the logical error rate, another important performance metric is the number of qubits used to achieve it. 
For the BB codes and the bilayer architecture, this includes the ancillary check qubits as well as the entire routing layer, which for an $[[n,k,d]]$ code uses $4n$ qubits in total.
The rotated surface code uses $d^2-1$ additional check qubits, which brings the total number of qubits to $2d^2-1$ for each copy. 
The total number of qubits used is listed together with the code parameters in Fig.~\ref{fig:circuit_level}.
The error bars on the data points are calculated using the standard error when sampling from a binomial distribution $\sqrt{p_{\log} (1-p_{\log} )/N}$, where $N$ is the number of collected samples. Due to the large number of shots taken, $N \sim 10^5$, error bars in Fig.~\ref{fig:circuit_level} and Fig.~\ref{fig:no-idle-results} are nearly invisible. Additionally, we plot a fit of 
\begin{equation}
    p_{\log} = 1 - (1 - \epsilon_L)^t
    \label{eq:logical_per_round}
\end{equation}
for both the surface and BB codes, from which we can extract the logical error rate per round, $\epsilon_L$.

The smallest BB codes encoding $k=8$ logical qubits are outperformed by surface codes which use fewer physical qubits. However, increasing the block length yields BB codes that
surpass the performance of similarly sized surface codes. This is illustrated in Fig.~\ref{fig:space_vs_epsilon}, where we see the BB codes achieving a lower logical error rate per round than the surface code while utilizing fewer qubits. Increasing the number of logical qubits to $k=12$, BB codes and the proposed architecture immediately outperform the surface codes in terms of logical error rate and space overhead.
Compared to twelve patches of a $[[36,1,6]]$ rotated surface code using a total of 852 physical qubits and a logical error rate per round of $\epsilon_L = 1.43\times 10^{-4}$, we find a $[[144,12,12]]$ BB code using 576 qubits that matches the performance, with $\epsilon_L = 1.56\times 10^{-4}$. Additionally, we find a $[[196,12,8]]$ code using 784 qubits that outperforms it with $\epsilon_L = 7.89\times 10^{-5}$.
At this scale, the improvements are not so drastic, but we expect to see greater overhead benefits as the block length and number of logical qubits increase.

We now vary the interval at which the long-range generators are measured, the results of which are also shown in Fig.~\ref{fig:circuit_level}(c). 
For the $[[90,8,6]]$ code there are 44 (not necessarily independent) generators of a single type. Using a routing schedule that was found by hand, all 35 short-range generators of a single type can be routed, purified, and measured in 9 steps; whereas it takes 5 steps to route, purify, and measure the remaining 10 long-range generators of the same type. Measuring the 35 short-range and 10 long-range generators of the opposite type requires an additional 9 and 5 steps, respectively. This means that measuring the long-range generators every five error correction rounds requires a circuit depth that is 28.5\% shorter than if the long-range generators were measured every round ($4\cdot 2\cdot 9 + 2\cdot(5+9)=100$ steps versus $5\cdot 2\cdot (5+9) = 140$ steps), at a negligible increase in the logical error rate per round from $\epsilon_L = 8.41\times 10^{-4}$ to $8.92\times 10^{-4}$. Increasing the size of the code to $[[196,12,8]]$, we again see negligible differences in logical error per round performance between the two measurement schedules, from $\epsilon_L = 7.41\times 10^{-5}$ to $7.89\times 10^{-5}$, with the additional benefit of a 32.0\% decrease in the depth of the syndrome extraction circuit when measured every five rounds.
The two codes consist of $22.22\%$ and $35.71\%$ long-range generators, respectively, yet both remain consistent between long-range measurement intervals.
As the block length increases, so does the discrepancy between the measurement times of the short- and long-range generators, increasing the circuit depth savings.
Additionally, this discrepancy would disproportionally introduce more errors during the long-range measurement rounds, potentially making it more performant to measure these large generators even less frequently. However, this behavior is very dependent on the idle error rate, and changing the idle error rate may cause the two curves to deviate.

\begin{figure}[t]
    \centering
    \includegraphics[width=0.95\linewidth]{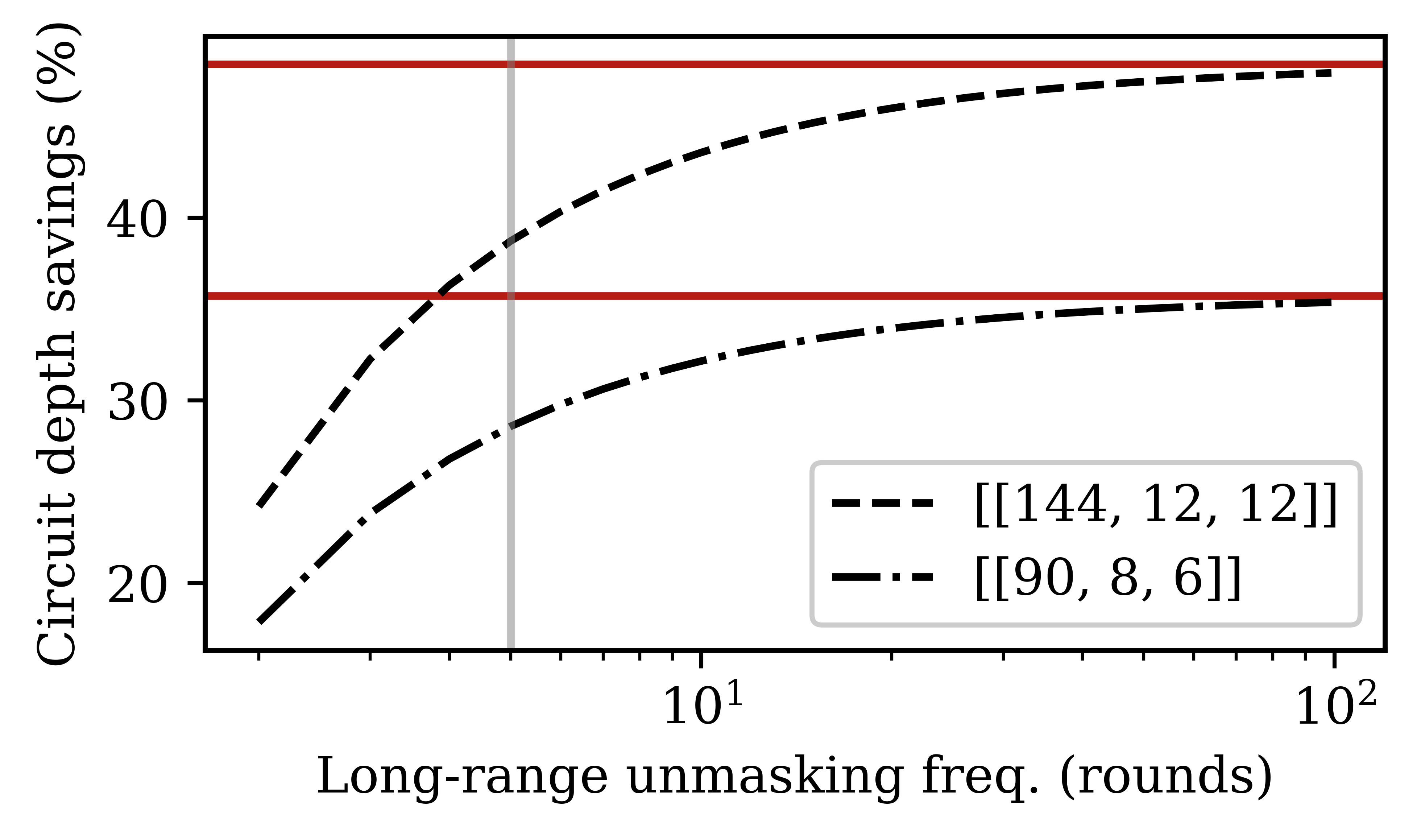}
    \caption{Percentage change in circuit depth, compared to a circuit that always measures every generator, as a function of number of rounds elapsed between long-range generator measurements. Horizontal solid red lines indicate the potential maximum reduction in circuit depth for the two BB code instances. The gray vertical line highlights the depth savings achieved by measuring the long-range generators every five error correction rounds, as done in Fig.~\ref{fig:circuit_level}(a)--(b) and Fig.~\ref{fig:no-idle-results}.}
    \label{fig:pot_savings}
\end{figure}

Achieving more significant reductions in circuit depth requires measuring the long-range generators much less frequently, as shown in Fig.~\ref{fig:pot_savings}. We display the potential circuit depth savings for two BB code instances as a function of how many error correction rounds elapse between measurements of the long-range generators.
The horizontal red lines indicate the maximum potential savings, corresponding to a schedule where the long-range generators are only measured once at the end of the circuit. 
For example, the $[[144,12,12]]$ BB code requires 16 (15) steps to measure the short- (long)-range generators of a single type, which gives a maximum depth savings of 48.4\%. 
When measuring every five rounds, as in Fig.~\ref{fig:circuit_level}(a)--(b) and Fig.~\ref{fig:no-idle-results}, we see a circuit depth savings of 38.7\%, indicated by the vertical gray line.
Measuring the long-range generators very infrequently will significantly degrade the error correction performance and may not be worth the reduced circuit depth. Instead, it may be more advantageous to measure the long-range generators relatively frequently, e.g., every 2--5 rounds; in that regime we still see considerable circuit depth savings (50\% to 80\% of the theoretical maximum), but the impact on the logical error rate is negligible.

\begin{table}[]
\begin{tabular}{c|c|c}
$[[n,k,d]]$ & Qubits & $\epsilon_L$ \\
\hline
  $\mathbf{[[72,8,6]]}$     &  288      &    $3.6\times 10^{-4}\pm 4.9\times 10^{-6}$          \\
  \hline
  $\mathbf{[[90,8,6]]}$     &  360      &    $2.0\times 10^{-4}\pm 3.3\times 10^{-6}$          \\
  \hline
    $[[200,8,5]]$ & 392 &  $2.0\times 10^{-4}\pm 6.5\times 10^{-7}$ \\
  \hline
  $\mathbf{[[120,8,8]]}$    &  480      &    $1.3\times 10^{-5}\pm 8.2\times 10^{-7}$          \\
  \hline
  $[[288,8,6]]$ & 568 &  $9.5\times 10^{-5}\pm 2.5\times 10^{-7}$ \\
  \hline
  $\mathbf{[[150,8,8]]}$    &  600      &    $5.9\times 10^{-6}\pm 2.4\times 10^{-7}$          \\
  \hline
  $[[392,8,7]]$ & 776 &  $2.0\times 10^{-5}\pm 1.5\times 10^{-7}$ \\
  \hline
  $[[512,8,8]]$ & 1016 & $7.5\times 10^{-6}\pm 4.3\times 10^{-8}$
\end{tabular}
\caption{Code parameters, total number of qubits used, and $\epsilon_L$ as extracted from Eq.~\eqref{eq:logical_per_round} for the simulations described in Section~\ref{sec:results} albeit with no idle error. Code parameters shown in bold correspond to BB code instances. Code parameters not in bold correspond to copies of the rotated surface code. }
\label{table:code_results2}
\end{table}

Even with the reduced idle error rate that we consider here, idle errors are a significant error source, especially on rounds where the long-range generators are measured. 
In Table~\ref{table:code_results2} and Fig.~\ref{fig:no-idle-results}, we perform the same simulations as described above but do not apply idling errors. Due to their short syndrome extraction circuit depths, the surface codes are unaffected from the decrease in idle error. However, we now find that all BB code instances achieve better logical error rates than surface codes while using fewer physical qubits, as shown in Fig.~\ref{fig:space_vs_epsilon}. Indeed, the $[[150,8,8]]$ code using 600 physical qubits sees an $8.8\times$ improvement in the logical error rate per round, from $\epsilon_L = 5.31\times 10^{-5}$ to $5.9\times 10^{-6}$, and now outperforms eight patches of a $[[64,1,8]]$ rotated surface code using $1016$ physical qubits with $\epsilon_L = 7.5\times 10^{-6}.$ Achieving negligible idle error rates may not be feasible, but it illustrates the regime where our protocol performs best.

\section{Discussion}
\label{sec:discussion}

In this paper, we have presented a bilayer architecture for implementing nonlocal qLDPC codes on quantum devices which are restricted to 2D local gates. We have shown that bivariate bicycle codes are well suited for such an architecture and described a parallelizable syndrome measurement scheme which makes use of the geometric parity check structure of the codes. Through circuit-level simulations of a multi-round decoding protocol, we found that BB codes attain comparable logical error rates to that of the rotated surface code while using fewer physical qubits. Furthermore, by applying the stacked model and masking, we achieved a significant decrease in syndrome extraction time with negligible impact on the error correction performance.

However, there are a number of challenges that must be considered in a physical implementation of this protocol.
Perhaps the most notable issue is the depth of the circuit required to perform even a single syndrome extraction. 
Implementing a single long-range CNOT gate requires constructing the long-range Bell pair, purifying it, and using it to implement a CNOT between a data qubit and check qubit.
Although several CNOT gates can be implemented in parallel, doing this for the entire set of generators requires 10s of routing steps, translating to a physical circuit with depth in the 100s.  One consequence of the depth of the circuit is that our protocol only performs well in the regime of low idle error rate.
Furthermore, per Claim~\ref{claim:dist_routing}, as the block length increases so too does the required routing time and, consequently, the physical circuit depth.
This is in stark contrast to the implementation in Ref.~\cite{bravyi2023highthreshold}, where the entire set of generators can be measured with a circuit of depth seven, albeit with the use of long-range connections. These long-range connections are a significant engineering challenge, and it is unclear whether implementing high-fidelity gates in this way is feasible.

We do find BB codes where the same parity check structure is shared between codes of increasing block length. For code families with this property, the number of steps in the syndrome extraction circuit is constant, so the noise per syndrome extraction cycle coming from idle error does not increase.  However, this also means that the percentage of long-range generators and, by extension, the amount of nonlocality in the code, decreases.
References \cite{Baspin_2022, Baspin_2022_2} showed that it is impossible to beat the asymptotic scaling of the surface code parameters without increasing the amount of nonlocality. Increasing the block length will yield larger $k$ and $d$, but the asymptotic scaling of these codes will approach that of the surface code; however, for finite sizes we would still expect to see significant space overhead savings compared to alternative, lower-rate codes. Even for BB codes with increasing generator shape, it is feasible that the increased error correction capabilities will outpace the increase in idle error. In particular, Fig.~\ref{fig:routing-algo-results} shows a power-law relationship between block length and routing depth. Assuming the code is operating below threshold, the exponential suppression in logical error should be sufficient to handle the increased effective idle error. 

Another challenge is that the simple purification protocol presented here does not scale well, as increasing the block length would lead to low-fidelity Bell pairs and a high purification failure rate. 
Although there are many entanglement purification protocols that improve the resulting Bell fidelity~\cite{Jiang_2007,dur2007entanglement,krastanov2019optimized,gidney2023tetrationally}, using them would further increase the depth of the syndrome extraction circuits or require additional ancillary qubits.
The one potential saving factor is that the vast majority of the work is done by the upper routing layer to construct and purify the Bell pairs, and the two layers interact in fewer than $1/10$ of the circuit steps. If it were possible to sufficiently isolate the data layer, akin to what is done in ion traps or reconfigurable atom arrays, it might be possible to achieve the low idling error rates that would greatly improve the performance of the protocol.

If the aforementioned issues can be solved, then scaling up should increase the advantage of qLDPC codes over the surface code.
One potential solution is to improve the circuit depth of the protocol. An architectural feature that could accomplish this is the ability to perform two-qubit gates on qubits that are some distance $R$ apart~\cite{pattison2023hierarchical}.
This is a natural operation on neutral-atom devices, where Rydberg-Rydberg interactions, especially dipolar ones~\cite{saffman10a}, can be quite long-range.
Furthermore, Rydberg-Rydberg interactions can help with syndrome extraction by naturally realizing long-range generalized multi-control multi-target CNOT gates~\cite{young2021}. At the same time, such long-range Rydberg-based gates may harm parallelism since only one such gate can be implemented within the Rydberg blockade radius at a time.
Gates beyond nearest-neighbor could also be feasible in superconducting devices through the use of medium-range couplers or photonic interconnects~\cite{leung2018deterministic}. When $R$ is a constant, the asymptotic behavior will remain unchanged; however, practically this would mean that the short-range generators would be much easier to implement. With an appropriate choice of $R$, it would then be possible to use the depth-7 circuit of Ref.~\cite{bravyi2023highthreshold} to measure the short-range generators, in which case the only difficulty would be to measure the long-range generators in the proposed manner. 
An alternative approach would be to add additional ancilla layers to the architecture. Although this would further increase the qubit overhead, it would allow for more parallelization during the syndrome measurement, decreasing the total circuit depth.

\section*{Acknowledgements}
We thank Patrick Rall for answering questions about Ref.~\cite{bravyi2023highthreshold}. A.M.C., A.V.G., M.J.G, and D.G.\ were supported in part by the National Science Foundation (QLCI grant OMA-2120575). A.M.C. and A.V.G. were supported in part by the DoE ASCR Accelerated Research in Quantum Computing program (award No.~DE-SC0020312). A.M.C., A.V.G., and D.D. were supported in part by the DoE ASCR Quantum Testbed Pathfinder program (awards DE-SC0019040 and DE-SC0024220). A.V.G.~was also supported in part by the NSF STAQ program, AFOSR, AFOSR MURI, and DARPA SAVaNT ADVENT. Support is also acknowledged from the U.S.~Department of Energy, Office of Science, National Quantum Information Science Research Centers, Quantum Systems Accelerator.  D.D.\ acknowledges support by the NSF GRFP under Grant No.~DGE-1840340 and an LPS Quantum Graduate Fellowship. E.S.~was supported by the U.S. Department of Energy, Office of Science, National Quantum Information Science Research Centers, Quantum Science Center. 

\section*{Data Availability}
The source code and data to generate the figures in the paper are available at \url{https://github.com/noahberthusen/qecc_routing}. 

\bibliography{bibliography}

\begin{widetext}
    
\end{widetext}

\appendix

\section{Bivariate bicycle code embeddings}
\label{apx:embedding}

Here we briefly describe the conditions for embedding bivariate bicycle codes in a 2D grid. See Ref.~\cite{bravyi2023highthreshold} for a more complete discussion.

\begin{definition}[{\cite[Definition 1]{bravyi2023highthreshold}}]
    \label{def:toric}
    A code $QC(A,B)$ has a toric layout if its Tanner graph has a spanning sub-graph isomorphic to the Cayley graph of $\mathbb{Z}_{2\mu}\times \mathbb{Z}_{2\lambda}$ for some integers $\mu, \lambda$.
\end{definition}

This is to say that codes with a toric layout have checks that act on the four nearest-neighbor qubits, and potentially on additional nonlocal qubits. The four nearest-neighbor qubits can be measured using a standard surface code syndrome extraction circuit~\cite{Tomita_2014}, whereas the nonlocal qubits are measured using the proposed protocol. In the following, the order of an element $\ord(M)$ of a multiplicative matrix group is the smallest positive integer such that $M^{\ord(M)}=\mathbb{I}$, where $\mathbb{I}$ is the identity matrix of the same dimension as $M$.

A BB code $QC(A,B)$ depends on choices of matrices $A$ and $B$, as in Eq.~\eqref{eq:ab}, whose terms are powers of $x$ or $y$, defined in Eq.~\eqref{eq:xy}. The matrices $x$ and $y$ depend on choices of positive integers $\ell,m$, and they correspond to the dimensions of the grid in which the code $QC(A,B)$ is embedded should it satisfy Lemma~\ref{lem:toric_layout}. The $\mu$ and $\lambda$ of Definition~\ref{def:toric} are $\ell$ and $m$, respectively.
In this toric layout, qubits and checks can be labeled by 
$\mathcal{M}$, which can be considered to be a list of integers $\mathbb{Z}_{\ell m}=\{0,1,\ldots,\ell m-1\}$ that represent locations on the 2D grid.

\begin{lemma}[{\cite[Lemma 4]{bravyi2023highthreshold}}]%
    \label{lem:toric_layout}
    A code $QC(A,B)$ has a toric layout on a $2\ell \times 2m$ grid if there exist $i,j,g,h\in\{1,2,3\}$ such that
    \begin{enumerate}
        \item $\langle A_i A_j^T, B_g B_h^T\rangle = \mathcal{M}$
        \item $\ord(A_i A_j^T)  \ord(B_g B_h^T) = \ell m$
    \end{enumerate}
\end{lemma}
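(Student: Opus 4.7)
The plan is to exhibit an explicit bijection from the $4\ell m$ vertices of the Tanner graph to the $4\ell m$ sites of a $2\mu\times 2\lambda$ toroidal grid (with $\mu=\ord(A_iA_j^T)$ and $\lambda=\ord(B_gB_h^T)$, so that $\mu\lambda=\ell m$ by condition~2), and then to verify that a specific spanning subgraph of the Tanner graph picks out exactly the Cayley edges. Setting $\alpha=A_iA_j^T$ and $\beta=B_gB_h^T$, conditions~(1) and~(2) combine to give an isomorphism $\mathbb{Z}_\mu\times\mathbb{Z}_\lambda\to\mathcal{M}$ via $(s,t)\mapsto\alpha^s\beta^t$: surjectivity comes from condition~(1), and equality of orders in condition~(2) promotes this surjection of finite sets to a bijection. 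Every $c\in\mathcal{M}$ therefore has unique coordinates $(s(c),t(c))$ with $\alpha$ implementing the shift $s\mapsto s+1$ and $\beta$ implementing $t\mapsto t+1$.

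The key step is the placement itself. I would partition the grid into $\mu\lambda$ blocks of size $2\times 2$ and map each of the four vertex types to its own parity class:
\begin{align*}
    X\text{-check }c &\mapsto (2s(c),\,2t(c)),\\
    L\text{-qubit }q &\mapsto (2s(A_jq)+1,\,2t(A_jq)),\\
    R\text{-qubit }q &\mapsto (2s(B_hq),\,2t(B_hq)+1),\\
    Z\text{-check }c' &\mapsto (2s(A_jB_hc')+1,\,2t(A_jB_hc')+1).
\end{align*}
Each of these four maps is a bijection onto its parity class because multiplication by any element of $\mathcal{M}$ permutes $\mathcal{M}$, and together the four parity classes partition the torus.

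What remains is a case-by-case verification that every one of the four grid-neighbors of each vertex is an actual Tanner edge. For example, the grid-neighbors of an X-check at $(2s(c),2t(c))$ decode---using the identities $\alpha A_j=A_i$ and $\beta B_h=B_g$, which follow from $A_j^T=A_j^{-1}$ together with commutativity in $\mathcal{M}$---to the L-qubits $A_j^{-1}c,A_i^{-1}c$ and the R-qubits $B_h^{-1}c,B_g^{-1}c$, all four of which are Tanner neighbors of $c$ through $H_X=[A\,|\,B]$. Running the analogous check on L-qubit, R-qubit, and Z-check vertices covers every grid edge exactly once and shows that the chosen subgraph is the Cayley graph of $\mathbb{Z}_{2\mu}\times\mathbb{Z}_{2\lambda}$, giving the spanning subgraph demanded by Definition~\ref{def:toric}.

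The main obstacle is not conceptual depth but bookkeeping: the twisted labelings (the $A_j$-shift on L-qubits, $B_h$-shift on R-qubits, and $A_jB_h$-shift on Z-checks) are essential. A naive placement sending L-qubit $q$ directly to block $(s(q),t(q))$ would fail because the required $\pm 1$ neighbors then do not appear as entries of $H_X$ or $H_Z$. Once the right shifts are identified, commutativity of $\mathcal{M}$ together with the identities for $\alpha$ and $\beta$ renders each of the sixteen neighbor checks a short algebraic manipulation.
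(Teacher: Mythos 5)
The paper does not actually prove this lemma---it is imported verbatim from Ref.~\cite{bravyi2023highthreshold}, and the surrounding appendix only explains how to test condition~1 computationally---so there is no in-paper proof to compare against. Your reconstruction is correct and follows essentially the same route as the proof in that reference: the surjection-plus-counting argument giving the bijection $\mathbb{Z}_\mu\times\mathbb{Z}_\lambda\to\mathcal{M}$, the placement of the four vertex classes on the four parity classes of the $2\mu\times 2\lambda$ torus with the $A_j$-, $B_h$-, and $A_jB_h$-twisted labelings, and the edge-by-edge verification via $\alpha A_j=A_i$ and $\beta B_h=B_g$ are exactly the ingredients of the original argument.
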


Here, $\langle A_i A_j^T, B_g B_h^T\rangle$ indicates the group generated by $A_i A_j^T$ and $B_g B_h^T$. The matrices $A_iA_j^T$ and $B_gB_h^T$ then correspond to horizontal and vertical translations, respectively, on the grid. To have a toric layout, these translations must visit the $\ell m$ $X$- and $Z$-type checks, as well as the two sets of $\ell m$ data qubits. 
Practically, this can be checked by multiplying $(B_gB_h^T)^b(A_iA_j^T)^a$ for $0 \le b < \ord(B_gB_h^T)$, $0 \le a < \ord(A_iA_j^T)$ with a basis vector of $\mathbb{F}_2^{\ell m}$ and seeing whether the other $\ell m - 1$ basis vectors can be obtained.
Satisfying this is equivalent to satisfying condition 1.
For a given choice of $A = A_1 + A_2 + A_3$ and $B = B_1 + B_2 + B_3$, there might not be assignments of $i,j,g,h$ such that Lemma~\ref{lem:toric_layout} is satisfied. There may also be several satisfying assignments. 
Each satisfying assignment yields an embedding with a defined generator shape, which in turn determines the fraction of generators that cross the long boundary condition. 
Thus, the embedding controls the routing schedule and number of masked generators, both of which affect the overall error correction performance of the code.

\clearpage

\begin{figure}[t]
    \centering
    \includegraphics[width=0.9\linewidth]{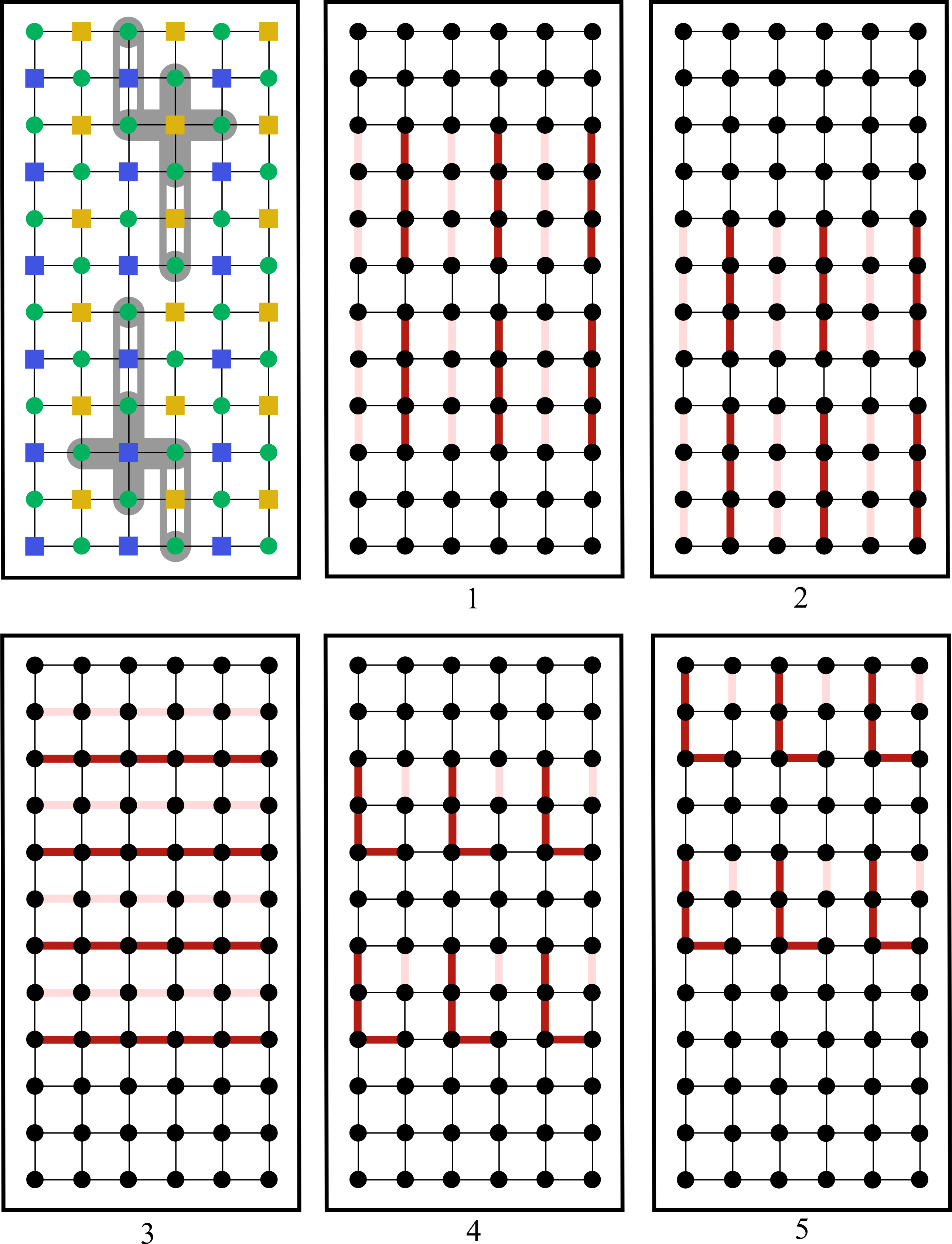}
    \caption{Example five-step schedule to route and purify the Bell pairs needed to measure the short-range, $Z$-type generators of a $[[36,4,4]]$ BB code constructed with $\ell=6, m=3$ and by matrices $A = x+y^3+y^2, B=y^3+x^5+x^4$. The first panel shows the structure of the $Z$-type checks (yellow squares) and $X$-type checks (blue squares), as outlined in gray. }
    \label{fig:routing_schedule}
\end{figure}

\begin{figure}[t]
    \centering
    \includegraphics[width=0.54\linewidth]{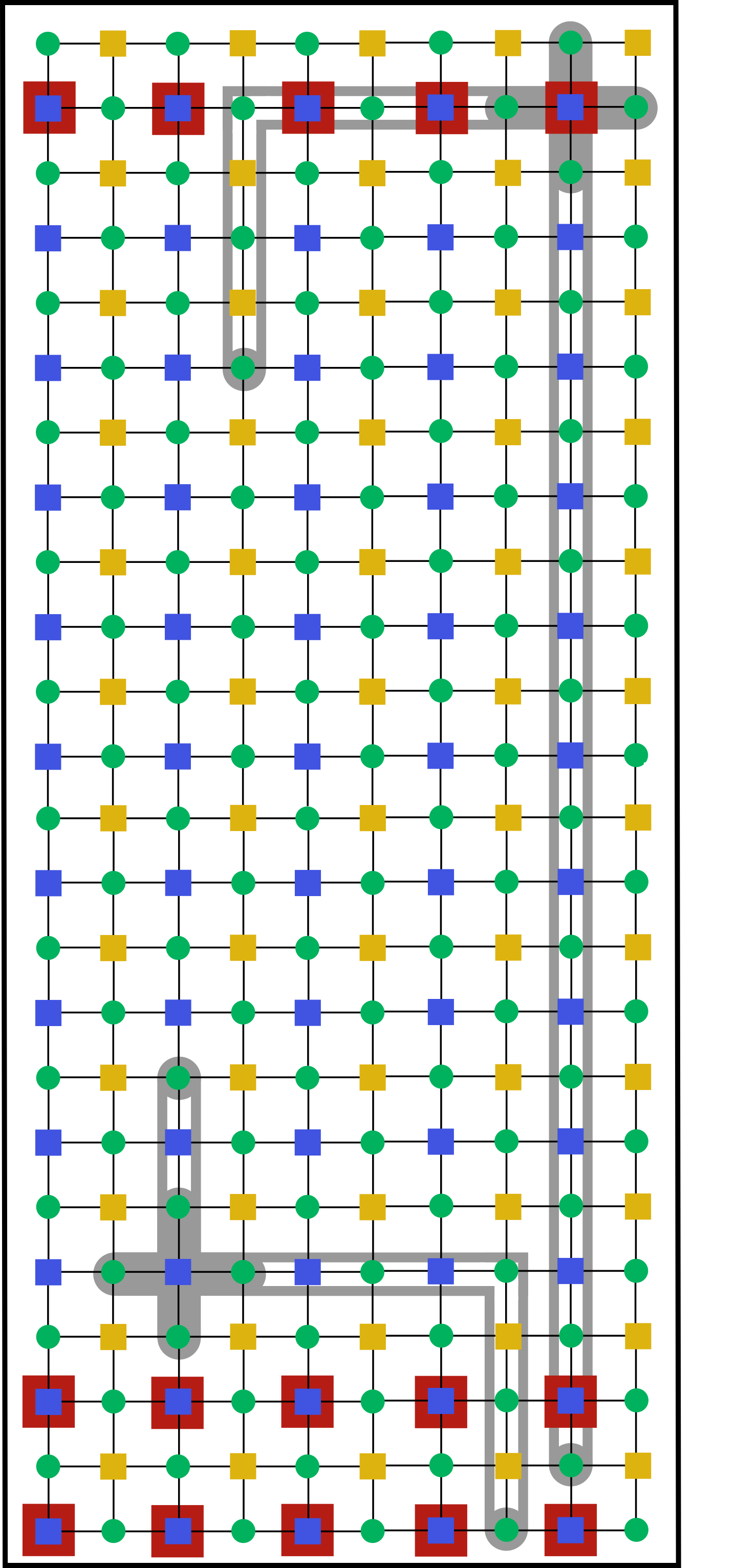}
    \caption{Illustration of the long- and short-range generators of a single type for the $[[120,8,8]]$ BB code constructed with $\ell=12, m=5$ and by matrices $A = x^{10}+y^4+y, B=1+x+x^2$. The qubits contained in the check are highlighted in gray. The checks (blue squares) highlighted in red are the long-range checks, as they cross the long boundary. For both $Z$- and $X$-type checks, there are 12 long-range checks out of a total of 48, yielding a mask percent of 25\%.}
    \label{fig:lrsr}
\end{figure}
\clearpage

\begin{figure*}[t]
    \centering
    \includegraphics[width=\textwidth]{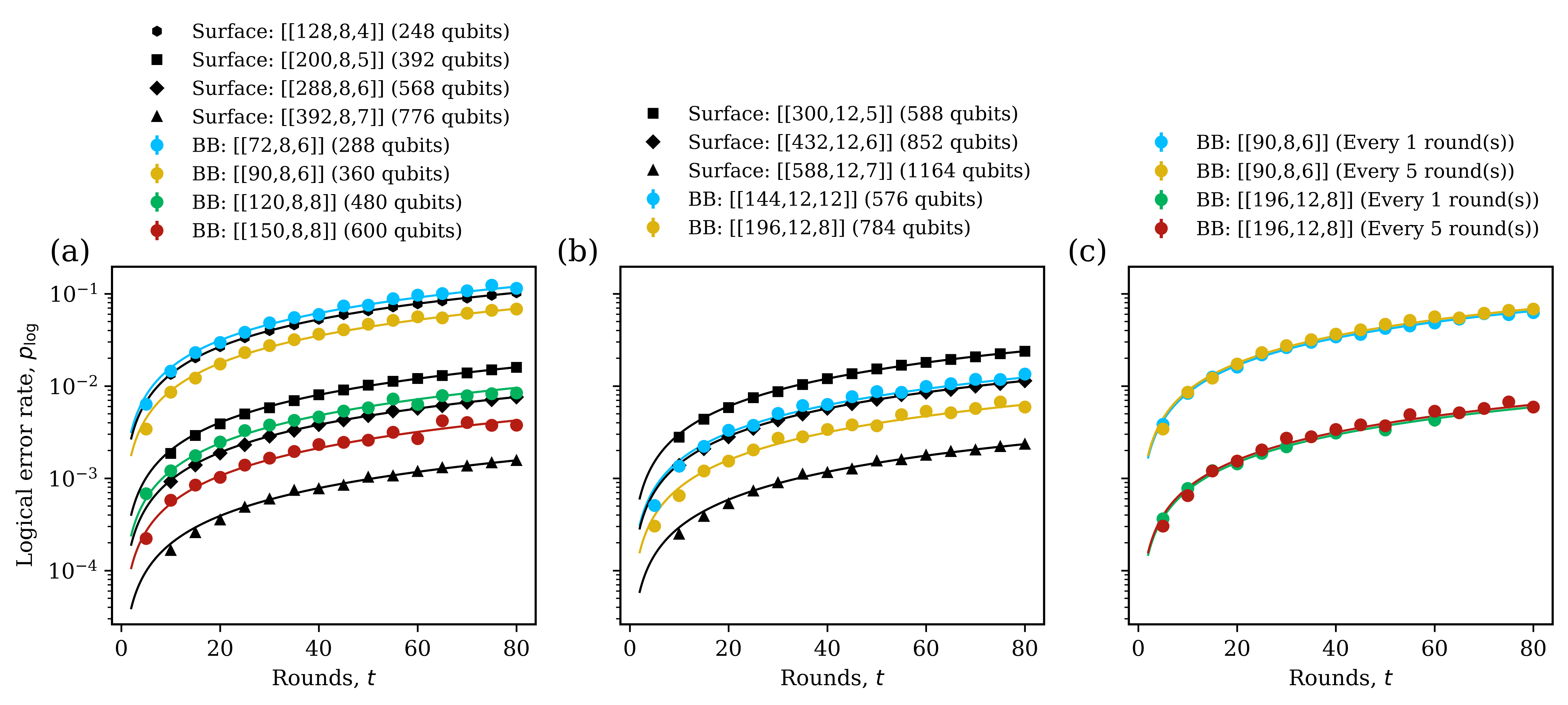}
    \caption{(a)--(b) Logical error rate of performing $t$ rounds of error correction with BB codes with (a) $k=8$ and (b) $k=12$ on the proposed bilayer architecture. The logical error rate of the same simulations using $k$ copies of rotated surface code is plotted as a comparison. (c) Comparison between time intervals at which the long-range generators are measured. For the BB codes in panels (a)--(b), the long-range generators were measured every five error correction rounds. A fit of Eq.~\eqref{eq:logical_per_round} is also shown, from which we extract the logical error rate per round, $\epsilon_L$.}
    \label{fig:circuit_level}
\end{figure*} 

\begin{figure}[t]
    \centering
    \includegraphics[width=0.95\linewidth]{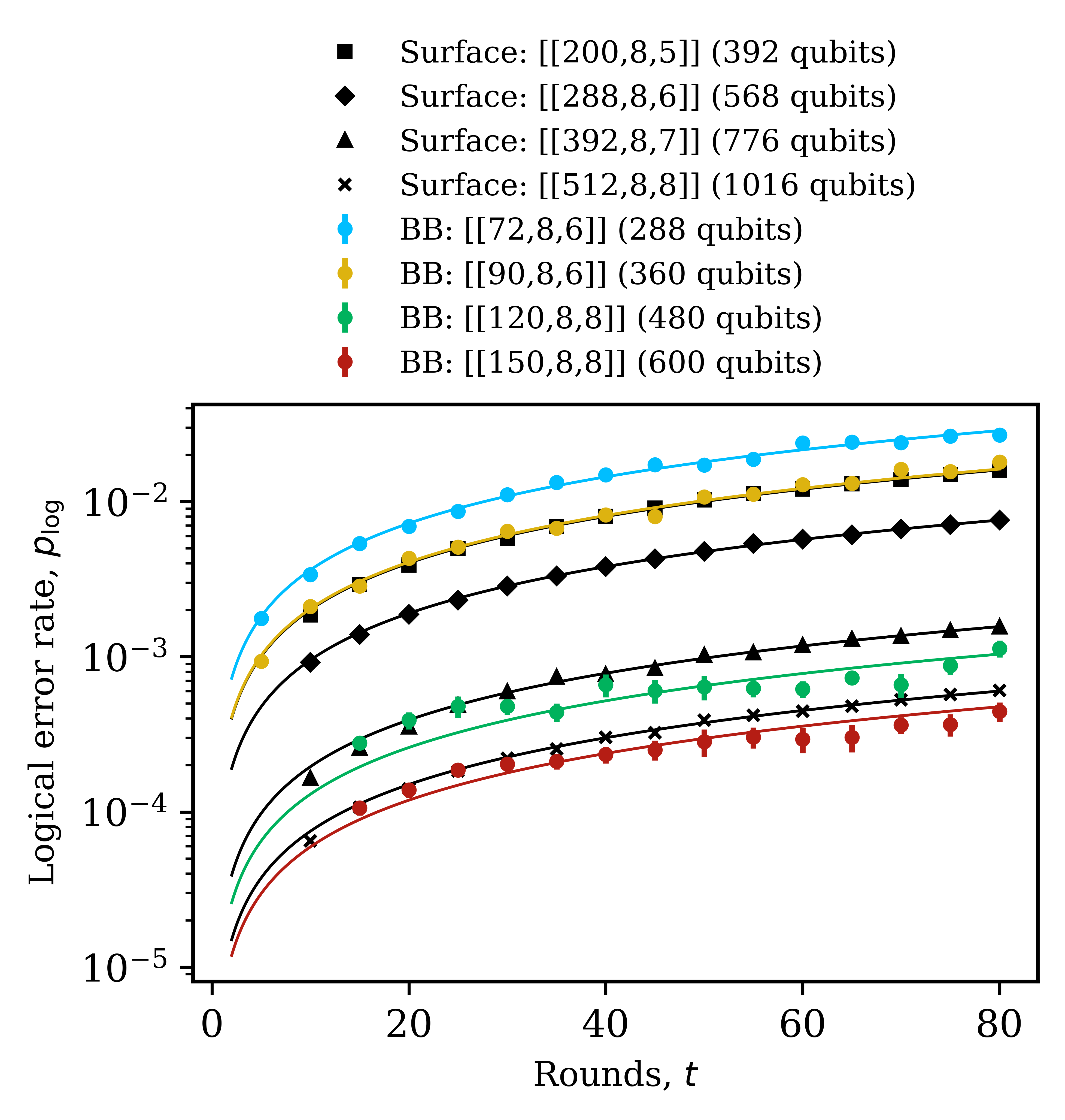}
    \caption{Logical error rate of performing $t$ rounds of error correction with BB codes with $k=8$ logical qubits on the proposed bilayer architecture. For this plot, we consider the case where the idle error rate is zero. The logical error rate of $k$ copies of the rotated surface code, calculated using Eq.~\eqref{eq:surface}, as well as the total number of physical qubits used, is again plotted as a comparison. 
    A fit of Eq.~\eqref{eq:logical_per_round} is also shown, from which we extract the logical error rate per round, $\epsilon_L$.
    } 
    \label{fig:no-idle-results}
\end{figure}

\end{document}